\setlist{itemindent=1.5ex}
\renewcommand{\epsilon}{\varepsilon}
\newcommand{\E}{{\mathbf E}}
\renewcommand{\sf}[1]{\mathsf{#1}}
\renewcommand{\L}{\Lambda}
\DeclareMathOperator{\poly}{poly}
\newcommand{\wRIP}{\sf{\gamma}(n)\text{-}\sf{RIP}}
\newcommand{\pRIP}{\sf{RIP}}
\newcommand{\secref}[1]         {Section~\ref{sec:#1}}
\newcommand{\seclabel}[1]    {\label{sec:#1}}
\newcommand{\figlabel}[1]   {\label{fig:#1}}
\newcommand{\figref}[1]         {Figure~\ref{fig:#1}}
\newcommand{\thmlabel}[1]   {\label{thm:#1}}
\newcommand{\thmref}[1]         {Theorem~\ref{thm:#1}}
\newcommand{\lemlabel}[1]   {\label{lem:#1}}
\newcommand{\lemref}[1]         {Lemma~\ref{lem:#1}}
\renewcommand{\eqref}[1]          {Eq.~\ref{eq:#1}}
\newcommand{\defn}{\emph}
\title{Efficient Rational Proofs with Strong Utility-Gap Guarantees\thanks
{A preliminary version of this paper will appear in the proceedings of the 11th International Symposium on Algorithmic Game Theory (SAGT) 2018. This work has been partially supported by NSF CAREER Award CCF 1553385,
CNS 1408695, CCF 1439084, IIS 1247726, IIS 1251137, CCF 1217708, by Sandia National Laboratories,
and by % the European Research Council under 
the European Union's 7th Framework Programme (FP7/2007-2013)~/~ERC grant agreement no. 614331.
BARC, Basic Algorithms Research Copenhagen, is supported by the VILLUM Foundation grant 16582.
}}
\author{
Jing Chen\inst{1}
\and
Samuel McCauley\inst{2}
\and
Shikha Singh\inst{2}}
\institute{
Stony Brook University.
\email{jingchen@cs.stonybrook.edu} \and
Wellesley College. \email{\{samuel.mccauley, shikha.singh\}@wellesley.edu}.}
\begin{document}
\maketitle
\begin{abstract}
As modern computing moves towards smaller devices and powerful cloud platforms, 
more and more computation is being delegated to powerful service providers.
Interactive proofs are a widely-used model to 
design efficient protocols for verifiable computation delegation.

  Rational proofs are payment-based interactive proofs.
The payments are designed to incentivize the provers to give correct answers. 
If the provers misreport the answer then they incur a payment loss of at least $1/u$, where $u$ is the {\em utility gap} of the protocol. 

  In this work, we tightly characterize the power of rational proofs that are super efficient, that is, require only logarithmic
time and communication for verification. 
We also characterize the power of single-round rational protocols that require only logarithmic space and randomness
for verification. Our protocols have strong (that is, polynomial, logarithmic, and even constant) utility gap. 
Finally, we show when and how rational protocols can be converted to give the completeness and soundness guarantees of
classical interactive proofs.

\end{abstract}

\section{Introduction}\seclabel{intro}
Most computation today is not done locally by a
client, but instead is outsourced to third-party service providers in exchange
for money. Trading computation for money brings up two problems---(a)
how the client can guarantee correctness of the outsourced computation (without redoing the computation), and (b) how to design the payment scheme. The two
problems are closely related: ideally, we want the payment scheme to
be such that it incentivizes service providers to perform the computation
correctly. 

Interactive proofs (IP) are the most well-studied and widely-used %\sam{Are they widely used in practice?  I think we're talking theory only here.}  
theoretical framework to verify
correctness of outsourced computation\cite{bitansky2012succinct,
canetti2013refereed,
braun2013verifying,
goldwasser2008delegating,
rothblum2013interactive,
gur2015non,
kilian1992note,
kalai2015arguments,
cormode2012practical,
canetti2012two, 
chakrabarti2015verifiable,
cormode2011verifying,
daruki2015streaming}.  In an IP, a weak client
(or \defn{verifier}) interacts with powerful service providers (or \defn{provers}) to
determine the correctness of their claim. At the end, the verifier
probabilistically accepts or rejects the claim.\footnote{In classical interactive proofs there is no payment---simply acceptance or rejection.} 
Interactive proofs guarantee that, roughly speaking,
the verifier accepts a truthful claim with probability at least $2/3$ (\emph{completeness})
and no strategy of the provers can make the verifier accept a false claim with probability more than~$1/3$~(\emph{soundness}).\footnote{
More formally, given an input $x$ and a language $L$, if $x\in L$, 
the verifier accepts with probability at least $2/3$ ({\em completeness}); if $x\notin L$, 
then no
strategy of the provers can make the
verifier accept with probability more than $1/3$ (\emph{soundness}).}.  

Rational proofs are payment-based interactive proofs for computation outsourcing
which leverage the incentives of the service providers. 
In rational proofs, the provers
act {\em rationally} in the game-theoretic sense, that is,
they want to maximize their payment.
The payment is designed such 
that when the provers maximize their payment, they also end up giving the correct answer.
%Thus, the provers in rational proofs are assumed to act \defn{rationally}.
The model of rational proofs ($\sf{RIP}$) was introduced
by Azar and Micali in~\cite{azar2012rational}. Since then, 
many simple and efficient rational protocols have been designed%
~\cite{azar2013super,
guo2014rational,
guo2016rational,
zhang2014efficient,
hubavcek2014rationality,
campanelli2015sequentially,
ChenMcSi16}.

While rational proofs provide strong theoretical guarantees, there are two main barriers that separate them from what is often desired in practice.  First, many rational protocols require a polynomial-time verifier---but a ``weak'' client is unlikely to be able to spend (say) quadratic time or linear extra space on verification.   Second, many of these protocols strongly rely on the rationality of the provers.  An honest prover may receive only a fraction of a cent more than a dishonest prover,
yet a rational prover is assumed to be incentivized by that small amount. However,  
service providers may not always be perfectly rational. 
%Ideally, we would want protocols that provide guarantees on how much extra an honest prover receives.  

The goal of this paper is to give protocols that overcome these barriers. 
%We discuss these goals in more detail in the following paragraphs.

\paragraph{\bf Utility Gap.}
The strength of the guarantee provided by rational proofs is captured by the notion of {\em utility gap}.
 The high level idea
behind utility gap is that provers who are not perfectly rational may not
care about small losses in payments and may lazily give the incorrect answer.
%If a rational proof protocol has a utility gap of $u$, on the other hand, then a prover
%who misleads the verifier to the incorrect answer is guaranteed to lose at least a $1/u$ fraction of the verifier's budget
%from his expected payment.
If a rational protocol has a utility gap
of $u$, then the provers who mislead the verifier to an incorrect
answer are guaranteed to lose at least $1/u$. (This is under a normalized budget of $1$; if the budget is scaled up to $B$,
such provers can be made to lose at least $B/u$.)
Thus, protocols with small utility gap are sound even against 
provers with \defn{bounded rationality}; that is, provers who are only sensitive to large losses.
%only sensitive to payment losses greater than $1/u$.

In this paper, we design efficient rational protocols with strong utility gap---that is,
polynomial, logarithmic, and even constant utility gap. 
%By tightly characeterizing the
%power of different utility gaps, we show that while strengthening utility gap 
%reduces the power of the proof system, it is still able to capture a large class of important optimization problems.
In~\secref{ipvsrip}, we show when and how a noticeable utility gap of a rational protocol
can be utilized to achieve the strong completeness and soundness guarantees of a classical proof.

%In~\secref{ipvsrip}<LeftMouse>
\paragraph{\bf Efficient Protocols.}
In this paper, we focus on designing rational protocols
with very small overheads in terms of verification time, space, communication cost
and number of rounds. In particular,
we design constant-round rational protocols
where the verification time and communication cost are logarithmic in the input size $n$. % in the size of the input. % size.
We also design single-round rational protocols that
have only logarithmic overhead on the verifier's use of space and randomness.
 
\subsection{Results and Contributions}
In this section, we summarize our results and contributions. 

\paragraph{\bf Super time-efficient rational proofs.}
We study the effect of different communication costs and an additional prover
on the power of rational proofs with a highly time-efficient verifier. % is {\em super-efficient}, that is, the verifier's running time is only logarithmic
The utility gap of these protocols is polynomial.
\begin{itemize}[leftmargin=*]
\item {\bf Constant communication.} 
We show that multiple provers do not add any power when the communication complexity of the protocol
is restricted to be extremely small---a constant number of bits.
That is, we show that the class of languages that
admit a multi-prover rational proof with a $O(\log n)$-time verifier and $O(1)$
communication is exactly $\sf{Uniform TC_0}$, which is the same as the 
power of single-prover version under the same costs~\cite{azar2013super, guo2014rational}.
$\sf{Uniform TC_0}$ is the class of constant depth, polynomial size uniform threshold circuits, %.% with AND, OR, NOT and threshold gates
%an important complexity class
that includes problems such as integer division, and iterated multiplication~\cite{allender1999permanent,
hesse2002uniform, hajnal1993threshold,allender1990power}.

\item {\bf Logarithmic communication.} We 
show that any rational proof with polynomial communication
can be simulated by a rational proof with logarithmic communication that uses an additional prover. Using this property,
we improve the communication complexity of Azar and Micali's~\cite{azar2013super} single-prover rational protocol 
and show that the class of languages that admit a two-prover rational proof with logarithmic communication is exactly the class
of languages decidable by a polynomial
time machine that can make polynomially many queries in parallel to an $\sf{NP}$ oracle, denoted
 $\sf{P_{||}^{NP}}$.\footnote{For parallel oracle queries,
both notations $\sf{P_{||}^{NP}}$~\cite{wagner1990bounded} and $\sf{P^{||NP}}$~\cite{azar2013super}
are used in literature.}
This is an important class~(e.g., \cite{wagner1990bounded, jenner1993computing, buhrman1994functions, krentel1988complexity, mathon1979note}) that includes optimization problems such as maximum
clique, longest paths, and variants of the traveling salesman problem. 
\end{itemize}

\paragraph{\bf Super space-efficient rational proofs.}
We achieve even better utility gap guarantees for the setting where the verifier's
use of space and randomness is super-efficient. In particular, we exactly characterize the class of single-round rational proofs with $\gamma(n)$
utility gap and logarithmic space and randomness as the class
of languages decidable by a polynomial-time machine that makes $O(\gamma(n))$ queries to an $\sf{NP}$ oracle,
denoted 
$\sf{P_{||}^{NP[\gamma(n)]}}.$ Even when $\gamma(n) = O(1)$ this bounded-query class is still sufficiently powerful and contains
%the large class of optimization problems mentioned above for the general class $\sf{P^{||NP}}$.
many of the optimization problems mentioned above.

Thus, highly space-efficient rational protocols with strong guarantees against imperfectly rational provers %with bounded-rationality 
can solve many important optimization problems.
%a prover that may only be sensitive to large loses.
%with very strong guarantees. 
%even with access to provers with highly-{bounded} rationality.  
\paragraph{\bf Rational proofs with completeness and soundness guarantees.}
Finally, we closely compare the two proof systems---rational and classical.  We 
construct a condition on the expected payments of rational proofs which, if satisfied,
turns them into a classical interactive proof with completeness and soundness guarantees.
%the same correctness and soundness guarantee as that a classical interactive proof.
We first show how to convert a payment-based protocol
for a language $L$ to an accept-reject protocol (without payments) for $L$ such that
the expected payment of the former % (which is restricted to be in $[0,1]$),
is exactly the probability with which the verifier accepts in the latter. We use this %this result
to prove that if the expected payments of all inputs $x\in L$ are noticeably far away from that of %the payments
 all inputs $x \notin L$, the rational protocol can be converted to a classical interactive protocol.

\subsection{Additional Related Work}
Azar and Micali~\cite{azar2013super} also characterize the classes $\sf{Uniform TC_0}$ and $\sf{P_{||}^{NP}}$.
Their characterization of $\sf{P_{||}^{NP}}$
requires polynomial communication, which we improve to logarithmic
using a second prover.
We also note that all protocols in~\cite{azar2013super}
have a polynomial utility gap (under a constant budget).

Rational arguments, super-efficient rational proofs where the prover is restricted to be polynomial time,
were introduced by Guo et al.~\cite{guo2014rational}.
Rational arguments for all languages in $\sf{P}$ were given in~\cite{guo2016rational}.
Campanelli and Rosario~\cite{campanelli2015sequentially} study sequentially composable rational proofs.
Zhang and Blanton~\cite{zhang2014efficient} design protocols to outsource matrix multiplications to a rational cloud.

The model of multi-prover rational interactive proofs was introduced by Chen et al.~\cite{ChenMcSi16},
where they study the power of the model in its full generality (that is, polynomial-time verifier
and polynomial communication). In this paper, we restrict our focus on
the power of multi-prover rational proofs when the verifier's running time and communication are restricted to be logarithmic.

Different variants of the rational-proof models have also been studied.
Chen et al.~\cite{chen2017rational} consider rational proofs
where the rational provers are {\em non-cooperative}~\cite{chen2017rational}. Inasawa and Kenji~\cite{inasawa2017rational}
consider rational proofs where the verifier is 
also rational and wants to minimize the payment to the provers. % rational proofs

Interestingly, the logarithmic-space verifier studied in this paper also happens to be a {\em streaming algorithm}, that is, 
the verifier does not need to look again at any input or message bits out of order. Thus, our space-efficient
rational proofs are closely related to the work on streaming interactive proofs~\cite{cormode2012practical,chakrabarti2015verifiable,
cormode2011verifying,daruki2015streaming}. 

Refereed games is another multi-prover interactive-proof model that leads to game-theoretic characterizations of various complexity classes~\cite{chandra1976alternation, feige1990noisy,
feige1997making, feige1992multi, reif1984complexity,
feigenbaum1995game,koller1992complexity}. The model of refereed games requires at least
one honest prover. % to be honest.

\section{Preliminaries}\label{sec:prelim}

We begin by reviewing the model of rational proofs~\cite{ChenMcSi16, azar2012rational}.

Let $L$ be a language, $x$ be an input string and $n= |x|$.
An {\em interactive protocol} is a pair $(V, \vec{P})$, where
$V$ is the {\em verifier} and
$\vec{P} = (P_1,\ldots,P_{p(n)})$ is the vector of {\em provers}, and $p(n)$ a polynomial in $n$.
The goal of the verifier is to determine if $x\in L$.
In general, the verifier runs in time polynomial in $n$ and uses polynomial space
as well. In section~\secref{logtime}, the verifier's
running time is $O(\log n)$. In \secref{logspace},
the verifiers may use polynomial time but are restricted to use $O(\log n)$
space and randomness. % ($T$ is at most polynomial in $n$), uses $S$ space ($S$ is at most polynomial in $n$) and flips private coins. 
The provers are computationally unbounded.\footnote{While the model allows for extremely powerful provers, those considered in this paper essentially only need to be powerful enough to determine if $x\in L$ or $x \notin L$.}
%The verifier and provers know $x$.

The verifier can communicate with each prover privately, but no two provers can communicate with each other.
In a \defn{round}, either each prover sends a message to the verifier, or the
verifier sends a message to each prover, and these two cases alternate.  Without loss of generality, provers send the first round of messages.
The first bit of the first round is the \defn{answer bit}, denoted by $c$, and indicates whether $x\in L$; that is, $x \in L$ iff $c=1$. We define the \defn{communication} of the protocol to be the maximum number of total bits transmitted (summed over all provers and all rounds) during the protocol.
The length of each message and the number of rounds in a protocol
are bounded above by the communication cost. 

Let $r$ be the random string used by $V$.
Let $\vec{m}$ be the vector of all messages exchanged.
At the end, the verifier
computes the total payment to the provers, given by
a payment function $R(x, r, \vec{m})$. % on $x$, $r$, and $\vec{m}$.
We restrict the verifier's budget to be constant, that is, $R \in [0, 1]$ for convenience. % for some constant $k$. % for convenience.
We may use negative payments to emphasize penalties but they can shifted to be non-negative.
The protocol (including the payment function~$R$) is public knowledge.

The verifier outputs the answer bit $c$ at the end of the protocol---thus the verifier always agrees with the provers.

Each prover $P_i$ can choose a {\em strategy} $s_{ij}:\{0,1\}^*\rightarrow\{0,1\}^*$
for each round $j$, which maps the transcript
he has seen up until the beginning of round $j$
to the message he sends in round $j$.
Note that $P_i$ does not send any
message when $j$ is even;  in this case $s_{ij}$ can be treated as a constant function.
Let $s_i = (s_{i1},\ldots, s_{ik})$ be the strategy vector of $P_i$
% uses in rounds $1, \ldots, p(n)$,
and $s = (s_1,\dots, s_{p(n)})$ be the strategy profile of the provers.
Given any input $x$, randomness $r$ and strategy profile $s$, we may write the vector $\vec{m}$ of messages exchanged in the protocol
more explicitly as $(V, \vec{P})(x, r, s)$.

The provers are {\em cooperative} and jointly act to maximize the total expected payment.
Thus, before the protocol starts, the provers pre-agree on a strategy profile $s$ that maximizes
%\[%
 $   u_{(V, \vec{P})}(s; x) = {\mathbf E}_r [R(x, r, (V, \vec{P})(x, r, s))].$
%\]
When $(V, \vec{P})$ and $x$ are clear from the context, we write $u(s)$ for $u_{(V, \vec{P})}(s; x)$.
%We define multi-prover rational interactive proofs as follows.
\begin{definition}[\cite{ChenMcSi16}]\label{def:mrip}
For any language $L$, an interactive protocol $(V, \vec{P})$ is a rational interactive proof protocol for $L$ if,
for any $x\in \{0, 1\}^*$ and any strategy profile $s$ of the prover(s)
such that $u(s) = \max_{s'} u(s')$, $c =1$ if and only if $x\in L$.
\end{definition}

Similar to classical proofs, single-prover rational interactive protocols, that is, when $\vec P = P$, are denoted by $\sf{RIP}$.
Multi-prover interactive protocols, where $\vec P = (P_1, \ldots, P_{p(n)})$ are denoted by $\sf{MRIP}$. In this
paper we study both single-prover and multi-prover rational proof protocols. 

%\sam{Defining both RIP and MRIP here is awkward.  Can we define rational interactive proofs and then split into these classes in a paragraph after?}
We use $\poly(n)$ as a shorthand for a polynomial $n^k$, for some constant $k$.

\subsection{Utility Gap and Budget in Rational Proofs}

In the above definitions, we assume that a prover is fully rational, and will give the correct answer for \emph{any} increase in expected payment, no matter how small.  However, a prover may be lazy, and unwilling to give the correct answer unless the correct answer increases its payment by some minimum amount.  This consideration is particularly important in cloud-based systems, where the payment must be enough to offset the service providers' computational costs.  This consideration is analyzed in rational proofs using the budget and utility gap. 

A similar concept has been studied in classical interactive proofs, which  
%Classical proof protocols 
provide completeness and soundness guarantees. In particular, they guarantee
that if $x\in L$, there exists $\vec P$ such that the verifier accepts $(V,
\vec P)$ with probability at least $c$, and and if $x\notin L$, for all $ \vec
P'$, the verifier accepts $(V, \vec P')$ with probability at most $s$ where
$c$ and $s$ are completeness and soundness parameters respectively, usually
constant with $0 < s < c< 1$.

The guarantee analogous to completeness and soundness in rational proofs
is that of utility gap. 
The notion of utility gap captures the payment loss incurred by provers
who misreport the answer bit. Since rational proofs are closed under
complement and provers can report $x\in L$ by sending $c=1$
or report $x\notin L$ by reporting $c=0$, unlike classical proofs,
completeness and soundness are not separately defined. Instead, 
utility gap captures the difference between completeness and soundness
for each instance $x$. 

In \secref{ipvsrip}, we show that any rational
proof protocol can be converted to one where the payments are $0$ or $1$,
where $1$ represents ``acceptance of claim $c$" and $0$ represents ``rejection of
claim $c$''. In such a protocol, the probability of acceptance is
then exactly equal to the expected payment of the provers. Informally, utility gap then is,
for a given $x$, the difference between
the probability that $V$ accepts $(V, \vec P)$
where $P$ makes an honest claim $c$ about $x$ and the probability that $V$ accepts a $(V, \vec P')$ where $\vec P'$ makes
a dishonest claim $c'$ about $x$.
%that verifier accepts a correct claim 

\begin{definition}[\cite{ChenMcSi16}]\label{def:rewardgap}
Let $L$ be a language with a rational proof protocol $(V, \vec{P})$ 
and let $\gamma(n) \ge 0$. We say that $(V, \vec{P})$ has an {\em $\gamma(n)$-utility gap} if
for any input $x$ with $|x|=n$, any strategy profile
$s$ of $\vec{P}$ that maximizes the expected payment, and
 any other strategy profile $s'$, where the answer bit $c'$ under $s'$ does not match
the answer bit $c$ under $s$, i.e., $c'\neq c$, then
$u(s) - u(s') > {1}/{\gamma(n)}$.
\end{definition}

\paragraph{\bf Relationship between utility gap and budget.} 
The \defn{budget} is the total expected payment that a verifier can give in a protocol. 

Utility gap and budget are closely related.
To study utility gaps consistently, we maintain a fixed $O(1)$ budget.\footnote{
  In contrast, Azar and Micali~\cite{azar2013super} maintain a polynomial-size budget.}
This is because 
utility gap scales naturally with the payment---a polynomial utility gap
under a constant budget is the same as a constant utility gap under a sufficiently-large polynomial budget.

\subsection{Analyzing Computational Costs of Rational Proofs}
Our primary focus in this paper is analyzing the various computational costs of rational interactive proofs.
The different parameters fall into two categories.

\paragraph{\bf Verification costs.} A verifier has three main resources: running time, space usage and its randomness.

In \secref{logtime}, we focus on time-efficient $O(\log n)$ time verifiers. % that run in $O(\log n)$ time. 
Thus, their space
and randomness is also $O(\log n)$. We denote the class of languages that have time-efficient 
RIP protocols, that is, protocols with a $O(\log n)$ time
verifier as $\sf{RIP}^t$. %$\sf{MRIP}^t$ 
%$(\log n\text{ time})\text{-}\sf{RIP}$. 
Multi-prover notation $\sf{MRIP}^t$ is analogous. % or $(\log n\text{ time})\text{-}\sf{MRIP}$.
Similar to the literature on ``probabilistically checkable proofs of
proximity'' (PCPPs)~\cite{rothblum2013interactive,gur2015non,ben2005short,ben2006robust}, we assume that the verifier has random access to the input string and the proof tape. Thus, if the messages
sent by the provers is $C(n)$ bits, the verifier needs at least $O(\log C(n))$ time to index a random location of the 
transcript.

As a logarithmic verifier cannot even read the entire input,
it is difficult to obtain protocols with good utility gap guarantees using these verifier.
To achieve better utility gap, in \secref{logspace}, we restrict the verifier's space usage and randomness, instead of its running time
and consider verifiers that use $O(\log n)$ space and $O(\log n)$ randomness. We denote the class of languages that have an RIP protocol 
with space- and randomness-efficient verifiers, that is, verifiers with $O(\log n)$ space and $O(\log n)$ randomness as
$\sf{RIP}^{s,r}$.
%$(\log n\text{ space,} \log n\text{ randomness})\text{-}\sf{RIP}$. % or $(\log n\text{ space,} \log n\text{ randomness})\text{-}\sf{MRIP}$.

%How about $V(\log n \text{ time})-\sf{RIP}$.

\paragraph{\bf Protocol costs.} A rational interactive proof protocol has three main ingredients: communication cost, number of rounds
of interaction and utility gap.\footnote{The number of provers is an additional parameter in MRIP protocols, but we ignore this
so as not to overload notation. All the MRIP protocols in this paper have two provers and all the upper bounds
work even with polynomially many provers.}

In~\secref{logtime}, we study the effect of varying the communication complexity of a protocol on its power when we have a logarithmic
time verifier. The number of rounds in all the protocols in the paper is $O(1)$. 

We denote the class of languages that have an RIP protocol with communication cost $C(n)$, number of rounds $k(n)$ and utility gap $\gamma(n)$ as $\sf{RIP}[C(n), k(n), \gamma(n)]$.
The multi-prover version is defined similarly.

\section{Verification in Logarithmic Time}\seclabel{logtime}

In this section we consider \emph{time-efficient} verifiers that run in time logarithmic in the input size.  We show that for time-efficient verifiers, access to multiple provers is fundamentally linked to the communication cost of the protocol: any single-prover protocol with high communication costs can be reduced to a communication-efficient multi-prover protocol.  On the other hand, multiple provers give no extra power for communication-efficient protocols.

Since the utility gap of all the protocols in this section is polynomial in $n$, we drop it from the notation
for simplicity. Thus, an RIP protocol with a $O(\log n)$-time verifier that has communication complexity $C(n)$ and round complexity $k(n)$ is
denoted as $\sf{RIP}^t[C(n), k(n)]$.

  \paragraph{\bf Constant communication.}\label{sec:constant-com}
We first show that multiple provers do not increase the power of a rational proof system when the communication complexity of the protocol is
very small, that is, only $O(1)$ bits. Recall
that with a single prover, 
$\sf{RIP}^t[O(\log n), O(\log n)] = \sf{RIP}^t[O(\log n), O(1)]= \sf{Uniform TC}_0 $~\cite{azar2013super, guo2014rational}.

  \begin{restatable}{theorem}{charpoly}\label{thm:char-poly}
$\sf{MRIP}^t[O(1), O(1)] %, O(\poly)] %= \sf{RIP}^t[O(\log n), O(1), O(\poly)]
= \sf{Uniform TC}_0.$
\end{restatable}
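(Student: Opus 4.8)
The plan is to prove the two inclusions $\sf{MRIP}^t[O(1), O(1)] \supseteq \sf{Uniform TC}_0$ and $\sf{MRIP}^t[O(1), O(1)] \subseteq \sf{Uniform TC}_0$ separately. The first inclusion is immediate: by the known single-prover result $\sf{RIP}^t[O(\log n), O(1)] = \sf{Uniform TC}_0$ from~\cite{azar2013super, guo2014rational}, and the fact that the single-prover protocols realizing $\sf{Uniform TC}_0$ can be taken to use only constant communication (the verifier simply reads off a constant-size piece of the claimed circuit value at a randomly chosen location), any language in $\sf{Uniform TC}_0$ already has a single-prover protocol in $\sf{RIP}^t[O(1),O(1)]$, and a single prover is a special case of multiple provers. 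I would double check in the cited work that the $O(1)$-communication version indeed holds, or else reprove it quickly: a $\sf{Uniform TC}_0$ computation can be checked by having the prover commit to the answer bit and the verifier spot-check one gate, which costs only $O(\log n)$ time and $O(1)$ communication.

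The substantive direction is the upper bound $\sf{MRIP}^t[O(1), O(1)] \subseteq \sf{Uniform TC}_0$. The key observation is that with $O(1)$ communication and $O(1)$ rounds, the entire transcript $\vec m$ lives in a space of constant size: there are only $O(1)$ possible transcripts, each a string of $O(1)$ bits. Fix a language $L$ with such a protocol $(V, \vec P)$. For each input $x$, the provers' optimal expected payment $u^*(x) = \max_s u(s;x)$ determines membership: since the protocol has polynomial utility gap, $x \in L$ iff the optimal strategy sets $c = 1$, and this is detectable because the best payment achievable with $c=1$ differs from the best with $c=0$ by more than $1/\poly(n)$. So the plan is to show that a $\sf{Uniform TC}_0$ circuit can compute, for each of the two candidate answer bits $c \in \{0,1\}$, the quantity $u_c^*(x) = \max\{u(s;x) : s \text{ has answer bit } c\}$, compare them, and output the bit achieving the larger value.

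To compute $u_c^*(x)$ in $\sf{Uniform TC}_0$: because communication is $O(1)$, a deterministic prover strategy is specified by a constant amount of information (for each of the $O(1)$ possible partial transcripts the prover has seen, one of $O(1)$ possible responses), so there are only $O(1)$ distinct strategy profiles to consider; moreover it suffices to consider deterministic strategies, since the provers maximize an expectation and can pick the best deterministic profile. For each fixed strategy profile $s$ and each random string $r$, the transcript $(V,\vec P)(x,r,s)$ and the payment $R(x,r,\vec m)$ are computed by the $O(\log n)$-time verifier. Here is where I expect the main obstacle: $u(s;x) = \E_r[R(x,r,\cdot)]$ is an average over exponentially many random strings $r$, so I cannot enumerate them. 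The resolution is that the verifier runs in $O(\log n)$ time and hence reads only $O(\log n)$ bits of $x$ and $O(\log n)$ bits of randomness along any computation path; so its behavior is captured by a decision tree of polynomial size (depth $O(\log n)$, hence $\poly(n)$ leaves), and $u(s;x)$ is a $\poly(n)$-term sum of payment values weighted by the probabilities of reaching each leaf. Summing $\poly(n)$ numbers, each of bounded precision, and comparing two such sums, is exactly the kind of iterated addition / threshold computation that $\sf{Uniform TC}_0$ performs; uniformity of the circuit follows from the uniformity of the verifier (its decision tree at each input length is computable in logspace). Then the circuit takes the maximum over the $O(1)$ strategy profiles for each $c$, compares $u_0^*(x)$ and $u_1^*(x)$, and outputs accordingly. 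The delicate points to get right are (i) carefully bounding the bit-precision of the payments so the $\poly(n)$-fold sum stays within $\sf{TC}_0$-computable range, (ii) arguing that the decision-tree leaves and their reaching-probabilities are themselves $\sf{TC}_0$-computable from $x$, and (iii) handling the tie-breaking using the guaranteed $1/\poly(n)$ utility gap so the comparison is robust. This mirrors the strategy used for the single-prover $\sf{Uniform TC}_0$ characterization, with the only new ingredient being that the multi-prover transcript space is still constant-size when communication is $O(1)$, so nothing blows up.
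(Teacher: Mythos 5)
Your proposal takes essentially the same approach as the paper: the lower bound is deferred to the single-prover $\sf{Uniform TC}_0$ characterization, and the upper bound exploits the fact that $O(1)$ communication leaves only $O(1)$ strategy profiles, then enumerates them, computes each expected payment as a polynomial-size sum over the $O(\log n)$-time verifier's random strings, and appeals to $\sf{Uniform TC}_0$'s ability to do iterated SUM and MAX. Your side remark---that one must check the cited single-prover protocol actually achieves $O(1)$ rather than $O(\log n)$ communication---is a fair point that the paper glosses over by simply asserting the lower bound ``follows directly.''
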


\begin{proof}
The lower bound follows directly from the single prover result~\cite{azar2013super,guo2014rational}.

Now, we prove that $\sf{MRIP}^t[O(1), O(1)] \subseteq \sf{Uniform TC}_0$. 

Let $L$ be
a language with a $k$-round MRIP protocol $(V, \vec{P})$ where $V$ runs in $O(\log n)$ time, and the transcript of $(V, \vec{P})$ has size $O(1)$,
where $k$ is a constant.

Note that the strategy profile $s$ of the provers $\vec{P}$ for a protocol 
with $O(1)$ communication can be specified in $O(1)$ bits. Thus, there
can be at most $O(1)$ possible strategy profiles for the provers to choose from.
We first construct a circuit that
decides $L$ and then show that the circuit can be
simulated by a $\sf{Uniform TC}_0$ machine. % $M$ that decides the language $L$.

  We construct the gates in independent blocks (i.e. there are no wires between two gates in different blocks). We denote the $i$th block by $G_i$ for $1\le i\le t$ for some constant $t$. The purpose of $G_i$ is to ``try out'' strategy profile $s^i$.
  %where $1 \le i \le t$ for some constant $t$.
In particular, the output of the block $G_i$ is the expected payment over all possible coin flips of the verifier when the strategy followed
by the provers is $s^i$. 
  %These gates are attached in parallel 
  All blocks finally output their solution (the expected payment) to 
  a single $\max$ gate that finds the maximum over the expected
payments. %---the strategy corresponding to this maximum reward must be the strategy followed by the provers $\vec{P}$. 

The structure inside a block $G_i$ is as follows: for each possible randomness $r$ of the verifier % (there can be at most $O(\log n)$ coil flips),
we have an input wire to the block gate (note that there are at most polynomially many $r$). Given
$r$, executing the strategy of the provers in a step by step manner (using the truth table) 
can be simulated by a depth $k$ circuit using using $\mbox{AND}$, $\mbox{OR}$, and $\mbox{NOT}$ gates. Thus, for each $r$, a
constant sized circuit can compute the final payment.

Finally, a $\mbox{SUM}$ gate at the end of $G_i$ can sum over
  the payments to compute the expected payment.\footnote{We could normalize by dividing by the number of possible $r$, but this scaling is unnecessary as we only care about relative payments.} This final expectation is the output of the block $G_i$. 

The final $\mbox{MAX}$ gate over the output all $G_i$ gives the
maximum possible expected reward. If the first bit of the corresponding strategy matches the first bit of the MRIP protocol's transcript,
then the circuit outputs $1$, else $0$.

We note that the above circuit structure can be simulated by a constant-depth uniform threshold circuit
since $\mbox{SUM}$ gates and $\mbox{MAX}$ gates with
polynomial input wires can be implemented using $\sf{Uniform TC}_0$ circuits~\cite{azar2013super}.  
\end{proof}

  \paragraph{\bf Logarithmic and polynomial communication.}\label{sec:poly-com}
We characterize the power of MRIP protocols with $O(\log n)$-time verification,
when the communication complexity of the protocol is logarithmic and polynomial in $n$.

  \begin{restatable}{theorem}{charlogpoly}\thmlabel{thm:char-log-poly}
%\begin{align*}
$\sf{MRIP}^t[\poly(n), \poly(n)] %&= \sf{MRIP}^t[O(\log(n)), O(\log n)]\\ 
=  \sf{MRIP}^t[O(\log(n)), O(1)] =\sf{P_{||}}^{\sf{NP}}.$
%\end{align*}
\end{restatable}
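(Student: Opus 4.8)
The plan is to prove both equalities via the cycle of inclusions
\[
\sf{MRIP}^t[O(\log n), O(1)] \subseteq \sf{MRIP}^t[\poly(n), \poly(n)] \subseteq \sf{P_{||}^{NP}} \subseteq \sf{MRIP}^t[O(\log n), O(1)].
\]
The first inclusion is immediate: a protocol with logarithmic communication and a constant number of rounds is, in particular, one with polynomial communication and polynomially many rounds. So the work splits into an upper bound, $\sf{MRIP}^t[\poly(n),\poly(n)] \subseteq \sf{P_{||}^{NP}}$, and a lower bound, $\sf{P_{||}^{NP}} \subseteq \sf{MRIP}^t[O(\log n), O(1)]$; together with Azar and Micali's single-prover characterization $\sf{RIP}^t[\poly(n), O(1)] = \sf{P_{||}^{NP}}$~\cite{azar2013super}, these force all the stated classes to coincide.

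For the upper bound, fix an $\sf{MRIP}^t$ protocol $(V,\vec P)$ for $L$ with polynomial communication and polynomially many rounds. I would rely on two structural facts. First, $V$ uses only $O(\log n)$ random bits, so there are only polynomially many random strings $r$, and for each fixed $r$ the verifier's behavior (its messages and its payment $R$) is a deterministic polynomial-time function of the transcript; since strategies are deterministic in the model, it follows that $u(s)$ is one of only polynomially many rationals, all efficiently enumerable. Second, $u(s)$ depends only on the polynomially many transcripts that $s$ induces over these $r$'s -- equivalently, only on the restriction of $s$ to the polynomially many prover-views reachable under $s$, a string of polynomially many bits. Hence, for each answer bit $b \in \{0,1\}$ and each of the polynomially many possible values $\theta$ of $u(s)$, the predicate ``there is a consistent, polynomial-size restricted strategy with answer bit $b$ and $u(s) \ge \theta$'' lies in $\sf{NP}$: guess the restricted strategy, check in polynomial time that it is a well-defined function of prover-views and that running the protocol never leaves its domain for any $r$, then compute all induced transcripts and hence $u(s)$ exactly, and compare with $\theta$. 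Asking all of these polynomially many queries in parallel yields $M_b = \max\{ u(s) : s \text{ has answer bit } b\}$ for $b \in \{0,1\}$, and by the definition of a rational proof $x \in L$ iff $M_1 > M_0$ (these are never equal; the polynomial utility gap only makes the comparison robust). Thus $L \in \sf{P_{||}^{NP}}$, and nothing changes for polynomially many provers.

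For the lower bound I would start from Azar and Micali's single-prover protocol witnessing $\sf{P_{||}^{NP}} \subseteq \sf{RIP}^t[\poly(n), O(1)]$ and compress its communication with the help of a second prover. Concretely, the plan is to prove a self-contained communication-reduction lemma -- also needed to state the theorem in its full generality -- asserting that every single-prover $\sf{RIP}^t$ protocol with polynomial communication and $O(1)$ rounds can be simulated by a two-prover $\sf{MRIP}^t$ protocol with only $O(\log n)$ communication and $O(1)$ rounds, preserving a polynomial utility gap. The point is that the $O(\log n)$-time verifier reads only $O(\log n)$ bits of the (polynomial-length) prover message; in the simulation the first prover answers these random-access queries on demand, while the second prover is used to \emph{bind} the first one to a single fixed string via a cross-check: the payment compares the two provers' answers whenever the verifier happens to query them at a common location, so that any profile whose answered bits are inconsistent with a single string is strictly dominated, and the analysis reduces to that of the original protocol. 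Combining this lemma with Azar and Micali's protocol yields $\sf{P_{||}^{NP}} \subseteq \sf{MRIP}^t[O(\log n), O(1)]$, closing the cycle.

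I expect the communication-reduction lemma to be the main obstacle: the cross-check payment must be simultaneously computable by an $O(\log n)$-time verifier and strong enough that honest, consistent behavior strictly maximizes the total expected payment while leaving a (possibly degraded but still polynomial) utility gap. The delicate case is the interleaved, $O(1)$-round setting, where a verifier challenge can depend on earlier prover answers, so the coordinated provers could in principle tailor their ``virtual'' message to the challenges they see rather than fixing it in advance; ruling this out -- especially when a single spot-check of the original protocol reads several correlated positions of the message -- is where the care goes. The upper bound, by contrast, is essentially a direct adaptation of the single-prover argument once one observes that only the polynomially many reachable transcripts matter.
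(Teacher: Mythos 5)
Your proposal is correct, but it takes a genuinely different decomposition from the paper. The paper's cycle of inclusions is
\[
\sf{P_{||}^{NP}} = \sf{RIP}^t[\poly(n), O(1)] \subseteq \sf{MRIP}^t[\poly(n), \poly(n)] \subseteq \sf{MRIP}^t[O(\log n), O(1)] \subseteq \sf{P_{||}^{NP}},
\]
where the middle inclusion is established by a general communication-reduction lemma (any $\sf{MRIP}^t$ protocol with $p(n)$ provers, $k(n)$ rounds and $C(n)$ communication compresses to a $2$-prover, $5$-round protocol with $O(T(n)+\log C(n))$ communication, implemented via an effective-transcript prover plus a spot-checking second prover), and the final inclusion is proved only for the $O(\log n)$-communication class, where the entire strategy profile is already a polynomial-size object that an $\sf{NP}$ oracle can guess. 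Your cycle instead puts the $\sf{P_{||}^{NP}}$ upper bound directly on $\sf{MRIP}^t[\poly(n),\poly(n)]$, which requires the additional observation that even when the full strategy is exponentially large, the restriction of a deterministic strategy profile to the polynomially many reachable prover-views (parametrized by prover, round, and the $\poly(n)$ random strings of the $O(\log n)$-time verifier) is a polynomial-size, self-consistency-checkable object. This observation is a genuine strengthening of the paper's Lemma~\ref{lem:log-poly-upper}, which sidesteps it by only handling $O(\log n)$ communication. In exchange, you only need the communication-reduction lemma in its single-prover, $O(1)$-round form (to compress Azar--Micali's protocol), rather than the paper's more general multi-prover, multi-round version. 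Both routes close the cycle; the paper's choice yields a reusable compression lemma, while yours yields a more self-contained upper bound. One small refinement worth noting: since the $O(\log n)$-time verifier produces payments with $O(\log n)$ bits of precision over $\poly(n)$ coin flips, any two distinct expected payments differ by at least $1/\poly(n)$, so the comparison $M_1 > M_0$ your oracle machine performs is automatically well-separated---the polynomial utility gap you invoke for ``robustness'' is in fact implied by the model, which is exactly the calculation the paper's Lemma~\ref{lem:log-poly-upper} makes explicit.
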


Azar and Micali~\cite{azar2013super} characterized the class $\sf{P_{||}^{NP}}$
in terms of single-prover rational proofs with $O(\log n)$ verification and $O(\poly(n))$ communication.
In particular, they proved that $\pRIP [O(\poly (n)), O(1)] = \sf{P_{||}^{NP}}$.

To prove~\thmref{thm:char-log-poly}, we first show that using
two provers reduces the communication complexity of the RIP protocol for $\sf{P_{||}^{NP}}$ exponentially.
In fact, we show prove a more general statement---any MRIP protocol (thus any RIP protocol as well) with
a logarithmic time verifier and polynomial communication can be simulated using two provers,
five rounds and logarithmic communication.

\begin{restatable}{lemma}{reducecommunication}\lemlabel{reducecommunication}
A MRIP protocol with $p(n)$ procers, $k(n)$ rounds, verification complexity $T(n)$, and communication complexity of $C(n)$
can be simulated by an MRIP protocol with $2$ provers, 5 rounds, verification complexity $O(T(n) + \log C(n))$ and communication
  complexity $O(T(n) + \log C(n))$.
% $ \sf{MRIP}[ \mathbf{T}_{\alpha(n)}, \mathbf{C}_{\gamma(n)}] \subseteq \sf{MRIP}[\mathbf{T}_{\alpha(n) + O(1)},  \mathbf{C}_{\alpha(n) + 2\lceil \log \gamma(n)\rceil}]$. 
\end{restatable}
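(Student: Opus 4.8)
The plan is to simulate the given $p(n)$-prover MRIP protocol $(V,\vec P)$ using only two provers, $P_1$ and $P_2$, while keeping all communication and verification logarithmic in $C(n)$ (plus the original verification cost $T(n)$). The central difficulty is that the original protocol may exchange $C(n)$ bits, but the new verifier cannot even read that many bits; so instead of having the new provers replay the full transcript, I would have the provers \emph{commit} to the transcript and then have $V$ spot-check only a constant (or logarithmic) number of positions of it, using the fact that $V$ needs only $O(\log C(n))$ time to index a random location of a proof tape (the PCPP-style random-access assumption stated earlier). The answer bit $c$ is still sent explicitly in round one, so the utility-gap/correctness obligation reduces to ensuring that the provers are incentivized to report a transcript that (a) is consistent with some honest strategy profile of the original $\vec P$ and (b) would cause the original $V$ to compute the payment it actually deserves.

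Here is the order of steps I would carry out. First, I would fix the roles: $P_1$ plays the ``prover side'' and claims the entire transcript $\vec m$ of an honest execution of $(V,\vec P)$ together with the induced payment value, and $P_2$ plays a ``checking/challenger'' role used to detect inconsistencies in $P_1$'s claimed transcript — this is the standard trick for collapsing many non-communicating provers into two, exploiting that no two original provers may coordinate. Second, I would have $V$ pick a random index $j \in [C(n)]$ (and, if the original protocol is multi-round, a random round and a random prover), ask $P_1$ for the bit of the transcript at position $j$ and ask $P_2$ independently for the same bit together with the portion of the transcript that the corresponding original prover would have ``seen'' before producing it; $V$ then checks local consistency — that $P_2$'s claimed incoming view is a prefix of $P_1$'s claimed transcript, and that $P_1$'s bit at $j$ is what a deterministic (honest) strategy would output on that view. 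Third, $V$ recomputes the original payment function $R$ on the claimed transcript (this is where the $O(T(n))$ term comes from) and pays accordingly, adding a penalty term triggered by any detected inconsistency; scaling the reward appropriately keeps the utility gap polynomial, as required by the ``drop utility gap from notation'' convention of this section. Fourth, I would count: round one carries the answer bit and $P_1$'s short commitment/claim; then $V$ sends its challenge (round two); $P_1$ and $P_2$ answer (round three); optionally $V$ requests one more disambiguating message (round four) and receives it (round five) — giving the claimed five rounds, each of length $O(T(n)+\log C(n))$.

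The correctness argument is the part to get right: I must show that the unique payment-maximizing strategy of $(P_1,P_2)$ forces $c$ to equal $\mathbbm{1}[x\in L]$. The argument is: by the utility gap of the original protocol, truthfully simulating an honest $\vec P$ (which yields the correct $c$) gives expected payment at least $1/\poly(n)$ more than any strategy that reports the wrong $c$, \emph{provided} the new payment faithfully reproduces the old expected payment on honestly-generated transcripts — which it does because the consistency checks never fire on an honest transcript. Conversely, any strategy that reports a transcript not realizable by honest provers of the original protocol — which is what a dishonest new prover would need to inflate its payment — is caught by the random spot-check with probability at least $1/C(n)$ (or $1/\poly(C(n))$), and the penalty is set large enough that the expected loss from being caught exceeds any possible gain; since $C(n)=\poly(n)$, this detection probability is still inverse-polynomial, so the overall utility gap remains $\poly(n)$. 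The main obstacle I anticipate is handling \emph{multi-round} original protocols cleanly: the original provers' later messages depend on $V$'s earlier (random) messages, so a single claimed transcript is really a claimed strategy \emph{tree}, and I need the spot-check to certify consistency along the single branch determined by the new $V$'s simulated randomness $r$ without reading the whole tree — I would resolve this by having $V$ first sample $r$, reveal it (the original protocol's messages from $V$ are then determined), and only then ask the provers to commit to the resulting single-branch transcript, so that at check time there is a unique correct bit at each position. A secondary subtlety is ensuring the new verifier's own space/randomness stays $O(T(n)+\log C(n))$ when it recomputes $R$ and re-derives $V$'s messages, which follows because the original $V$ ran within $T(n)$ anyway.
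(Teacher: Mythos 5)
Your high-level architecture—two provers, cross-examination, having $V'$ first sample and reveal $r$ so that the original transcript collapses to a single branch, then spot-checking a random position against $P_2$ and penalizing detected disagreement—matches the paper's proof closely. But there is a genuine gap in the middle that the paper's key observation fixes and your proposal does not.

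You have $P_1$ ``commit'' to the transcript $\vec m$ (with only $O(\log C(n))$ bits) and then have $V$ spot-check a constant or logarithmic number of positions. Two problems. First, the provers are computationally unbounded and there is no cryptography in this model, so there is no way to bind $P_1$ to a $C(n)$-bit string with a short message; the ``commitment'' must itself be the data. Second—and more fundamentally—$V'$ must actually \emph{compute the original payment} $R(x,r,\vec m)$ to pay in the new protocol, and this requires $V'$ to have in hand every bit of $\vec m$ that $V$ would read. A constant or logarithmic number of spot-checked bits is not enough, and sending all of $\vec m$ is $C(n)$ bits. The paper's resolution is the observation you circle but never land on: since $V$ runs in time $T(n)$, it can access at most $T(n)$ bits of the $C(n)$-bit transcript, so $P_1'$ can send the \emph{effective transcript} $\tilde m$ consisting of exactly the bits $V$ accesses, in access order, of length $\le T(n)$. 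This is what gives the $O(T(n)+\log C(n))$ communication bound: $\tilde m$ costs $T(n)$, and the random index $(i,j,k)$ into the full transcript and $P_2$'s single-bit reply cost $O(\log C(n))$. Without the effective-transcript trick, your protocol either sends too much or cannot compute $R$.

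A smaller omission: you gesture at ``the penalty is set large enough,'' but this requires an explicit scaling. The honest-case payment must be damped to something like $R/(2C(n))$ so that a $-1$ penalty suffered with probability only $1/C(n)$ strictly dominates any gain from lying about a single transcript bit; otherwise the expected-value comparison in the inconsistency case does not close. Once you add the effective-transcript idea and this scaling, your round structure and your two-case argument (inconsistency between $P_1'$ and $P_2'$ caught with probability $\ge 1/C(n)$; consistency but wrong answer bit loses by the original protocol's utility gap) line up with the paper's.
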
 

\begin{proof}
Let $(V, \vec{P})$ be the MRIP protocol for a language $L$ with $p(n)$ provers where $V$'s running time is $T(n)$  
and $C(n)$ bits of communication are exchanged over $k(n)$ rounds. Without loss of
generality, suppose each message is of length $\ell(n)$.
Note that $k(n)\le C(n)$ and $\ell(n) \le C(n)$. 
We shift and scale the payment function $R$ of $(V,\vec{P})$ such that $R \in [0,1]$.
The 2-prover 5-round MRIP protocol $(V', P_1', P_2')$
in~\figref{rip-to-mrip} simulates $(V, \vec{P})$.

\begin{figure}[tbhp] 
\centering 
\fbox{
\begin{minipage}{0.96\textwidth} 
{\normalsize
\vspace{0.0ex}
\noindent{}
 
For any input string $x$ of length $n$, the protocol $(V', P_1', P_2')$ works as follows.
\begin{enumerate}[nolistsep, leftmargin=10pt]
\item $P_1'$ sends $m_1$ to $V'$, where $m_1$ is the message sent by $P_1$ in the first round of $(V, \vec{P})$ according to the best strategy profile $s$ of $\vec{P}$. % $c$ to $V'$. 
$V'$ outputs $c$, the first bit of $m_1$ at the end of the protocol.
\item $V'$ generates the random string $r$ used by $V$ in $(V, \vec{P})$ 
and sends it to $P_1'$.
\item\label{round:effectt} $P_1'$ sends a string $\tilde{m}$, which is a concatenation of bits accessed by $V$ in order in the transcript $\vec{m} = (V, \vec{P})(x, r,s)$.
\item\label{round:check} $V'$ chooses a round $j$ from $\{1,2 \ldots, k(n)\}$, a prover index $i \in \{1, \ldots, p(n)\}$ and a bit index $k$ from $\{1, 2, \ldots, \ell(n)\}$
uniformly at random.
\item $V'$ simulates $V$ using $\tilde{m}$ and sends all messages sent by $V$ to $P_i$ up to round $j-1$ to $P_2$. 
\item\label{round:commits} $P_2$ sends a bit $b$ to $V'$, where $b$ represents the $k$th bit of the round-$j$ message sent by $P_i$ in $(V, \vec{P})$.
\item $V'$ simulates $V$ to check if $V$ ever accesses the $k$th bit of $P_i$'s round $j$ message in $\vec{m}$. 
If $V$ does not, then the protocol ends and $R'=0$. 
\item\label{round:matchingbit} Finally, $V'$ computes the payment $R'$ as follows.
\begin{enumerate}[noitemsep, nolistsep, leftmargin=10pt]
%	\item If $T(n,r) \neq, then $R' = -1$.
%
	\item If $b$ does not match the $k$th bit of $P_i$'s round-$j$ message in $\tilde{m}$, $R' = -1$.

	\item Else, $V'$ computes the payment $R$ in $(V, \vec{P})$, and $R' = {R}/({2 C(n)})$.
\end{enumerate}
\end{enumerate}
}
\end{minipage}
}
\caption{MRIP protocol simulating an RIP protocol for $L \in \sf{RIP}[T(n),C(n)]$.}
\figlabel{rip-to-mrip}
  \vspace{-.2in}
\end{figure}

The string $\tilde{m}$ in step~\ref{round:effectt} is the {\em effective transcript} of the protocol $(V, \vec{P})$. 
Since $V$ runs in time $T(n)$, for a given randomness $r$, $V$ can access at most $T(n)$ bits from the $C(n)$-size
transcript $\vec{m}$ of the protocol $(P,V)$. Thus, $|\tilde{m}| \le T(n)$. 

Furthermore, since any index $i,j$ where $1 \le i \le \ell(n)$
and $1 \le j \le k(n)$ is of size at most $O(\log C(n))$, the communication complexity of the protocol $(V', P_1', P_2')$
is $O( T(n)+ \log C(n))$.
  Similarly, $i,j,k$ can be randomly selected in $O(\log C(n))$ time, leading to total time $O(T(n) + \log C(n))$.
%Given $T(n)$ correctly, $V'$ can completely simulate $V$ (generate all of $V$'s messages in each round) and compute the reward $R$
%of $(P, V)$. Thus, all $V'$ needs to simulate the entire protocol is to incentivize the provers to commit to $T(n)$ correctly.\\

We now prove correctness of the protocol in~\figref{rip-to-mrip}. % described above, note that the provers can lie in the following ways.
%Suppose $P_1$ and $P_2$ use strategy $s'=(s_1', s_2')$. 
Note that $P_2'$ commits to a strategy profile $s'$ of the provers $\vec{P}$ in step~\ref{round:commits}. We consider two cases.\\
{\bf Case 1.} For some randomness $r$, suppose $P_1'$ and $P_2'$ do not agree on the effective transcript $\tilde{m}$. % of the protocol $(V,P)$. %%. (n,r)$.
Then without loss of generality, 
there exist indices $i,j, k$ such that the corresponding bit is part of the effective transcript, and
the verification in step~\ref{round:matchingbit} fails with $R'=-1$. The probability 
that $V'$ chooses such indices $i,j, k$ in step~\ref{round:check} is at least  $1/C(n)$.
Thus, the expected payment of the provers is at most:

\[ -1 \left( \frac{1}{C(n)}\right) + \frac{R}{2 C(n)} \left( \frac{C(n) - 1}{C(n)} \right)
\le \frac{1}{C(n)} \left( \frac{R}{2} - 1 \right) < 0,\]
where the last inequality follows from the fact that $R \le 1$. If $P_1'$ and $P_2'$
are consistent on $\tilde{m}$ and $r$  (keeping the rest of their strategy the same)
they can improve their expected payment since their payment under $r$ would be least $0$. Thus, this case does not occur
under the best strategy profile of
$P_1'$ and $P_2'$.\\ 
{\bf Case 2.} $P_1'$ and $P_2'$
agree on the effective transcript $\tilde{m}$ for every randomness $r$, but the
answer bit $c'$ under the strategy $s'$ committed by $P_2$ for $\vec{P}$ is incorrect. 

For a given randomness $r$ and indices $i, j$, and $k$ such that $V$ accesses $k$th bit of the round-$j$ message of $P_i$
in $(V,\vec{P})$, $R'= {R(s',x)}/({2C(n)})$, where
$R(s',x)$ is the payment of protocol $(V,\vec{P})$ under strategy $s'$. By the correctness
and utility gap of $(V,\vec{P})$, we know that the expected payment $u(s',x) + 1/\poly(n)< u(s,x)$, where $s$ is the best strategy of $\vec{P}$.
Thus, in this case, $P_1'$ and $P_2'$ lose a polynomial amount if they use strategy $s'$ instead of $s$.
%,
%and thus $s'$ is not the  $P_1'$ and $P_2'$. %as well.
%, and this case
%does not occur under the best strategy profile of $P_1$ and $P_2$.  
\end{proof}

\lemref{reducecommunication} demonstrates the importance of two provers over
one to save on communication cost in rational proofs. %The lower bound foll
%follows directly from \lemref{reducecommunication}. % and the fact that $\pRIP [O(\poly (n)), O(1)] = \sf{P^{||NP}}$~\cite{azar2013super}.

%
\begin{corollary}
$\sf{RIP}^t[O(\poly(n)), O(1) ] =  
\sf{P_{||}^{NP}}  \subseteq \sf{MRIP}^t[ O(\poly(n)), O(\poly(n)] \subseteq \sf{MRIP}^t[O(\log n), O(1) ]. $
\end{corollary}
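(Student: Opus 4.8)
The plan is to assemble the corollary from three pieces that are already on the table: Azar and Micali's characterization $\pRIP[O(\poly(n)), O(1)] = \sf{P_{||}^{NP}}$ recalled just above, the trivial observation that a single-prover protocol is a degenerate multi-prover protocol, and \lemref{reducecommunication}. No new construction is needed; the work is in checking the parameter bookkeeping.

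First, the equality $\sf{RIP}^t[O(\poly(n)), O(1)] = \sf{P_{||}^{NP}}$ is exactly the cited result of Azar and Micali, so I would simply invoke it. Next, for $\sf{P_{||}^{NP}} \subseteq \sf{MRIP}^t[O(\poly(n)), O(\poly(n))]$, I would note that any $\sf{RIP}^t$ protocol is in particular an $\sf{MRIP}^t$ protocol: keep the single prover $P$ (optionally padding with dummy provers whose messages are ignored and contribute $0$ to the payment), and observe that $O(1)$ rounds and $O(\poly(n))$ communication are subsumed by $O(\poly(n))$ rounds and $O(\poly(n))$ communication, with the verifier's running time, randomness, and (polynomial) utility gap all unchanged. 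Composing with the first step gives the middle containment. Finally, for $\sf{MRIP}^t[O(\poly(n)), O(\poly(n))] \subseteq \sf{MRIP}^t[O(\log n), O(1)]$, I would apply \lemref{reducecommunication} with $T(n) = O(\log n)$, $C(n) = \poly(n)$, $k(n) = \poly(n)$, and $p(n) = \poly(n)$: the lemma yields a $2$-prover, $5$-round protocol whose verification and communication complexities are both $O(T(n) + \log C(n)) = O(\log n + \log \poly(n)) = O(\log n)$, and since $5 = O(1)$ this protocol lies in $\sf{MRIP}^t[O(\log n), O(1)]$. Chaining the three containments proves the corollary.

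The main point to be careful about is not a genuine obstacle but a verification: I must confirm that feeding a polynomial $C(n)$ into \lemref{reducecommunication} really collapses the $\log C(n)$ term into $O(\log n)$ (it does, since $\log \poly(n) = O(\log n)$), and that the reduction does not blow the utility gap past polynomial. On the latter, Case~2 of the lemma's proof rescales the payment as $R' = R/(2C(n))$, so a $1/\poly(n)$ utility gap degrades only by the multiplicative factor $2C(n) = \poly(n)$ and remains $1/\poly(n)$ — still polynomial, which is all that is needed here since utility gap is suppressed from the notation throughout this section. I would also add one clarifying sentence that \lemref{reducecommunication} is stated for $\sf{MRIP}^t$, so it is precisely the trivial $\sf{RIP}^t \subseteq \sf{MRIP}^t$ embedding from the second step that lets the Azar--Micali protocol be handed to the lemma.
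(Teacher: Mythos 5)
Your proof is correct and reconstructs exactly the chain the paper intends: Azar--Micali gives the equality, the trivial $\sf{RIP}^t \subseteq \sf{MRIP}^t$ embedding gives the first containment, and \lemref{reducecommunication} applied with $T(n) = O(\log n)$, $C(n) = \poly(n)$ gives the second (the paper states the corollary without an explicit proof, leaving precisely this assembly to the reader). Your parameter bookkeeping (confirming $\log C(n) = O(\log n)$, noting the utility gap degrades only polynomially and is suppressed from the notation in this section) is the right thing to check and is handled correctly.
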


To complete the proof~\thmref{thm:char-log-poly}, we prove the following upper bound.

\begin{restatable}{lemma}{logpolyupper}\lemlabel{log-poly-upper}
$\sf{MRIP}^t[ O(\log(n)), O(1)] \subseteq \sf{P_{||}^{NP}}.$
\end{restatable}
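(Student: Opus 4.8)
The plan is to let a polynomial-time machine with parallel $\sf{NP}$ oracle queries ``guess'' the payment-maximizing prover strategy through the standard threshold trick, and then read off its answer bit. First I would check that every relevant object is polynomially small: since $V$ runs in $O(\log n)$ time it tosses only $\rho(n)=O(\log n)$ coins, so there are $2^{\rho(n)}=\poly(n)$ random strings $r$; and since the total communication is $C(n)=O(\log n)$ over $O(1)$ rounds, each message and each prover-view consists of $O(\log n)$ bits, so only $\poly(n)$ distinct prover-views can ever arise, and a strategy profile $s$ of $\vec{P}$ can be written down as a polynomial-size table listing the $O(\log n)$-bit response at each view. Given such an $s$ and a string $r$, simulating $(V,\vec{P})(x,r,s)$ and evaluating the payment takes $\poly(n)$ time, so $u(s)=2^{-\rho(n)}\sum_r R\big(x,r,(V,\vec{P})(x,r,s)\big)$ is computable in $\poly(n)$ time from $s$.

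Next I would pin down a polynomially fine grid of candidate payoff values. After rounding the output of $R$ to the nearest multiple of $1/D$ for some $D=\poly(n)$ exceeding $2\gamma(n)$ --- which moves each $u(s)$ by less than $\tfrac{1}{2\gamma(n)}$, hence preserves every sign comparison made below --- each value $u(s)$ lies in the set $\mathcal{T}=\{\,k/(D\,2^{\rho(n)}) : 0\le k\le D\,2^{\rho(n)}\,\}$, which has $\poly(n)$ elements. For every $t\in\mathcal{T}$ and every $c^\ast\in\{0,1\}$, the predicate ``there exists a strategy profile $s$ with answer bit $c^\ast$ such that $u(s)\ge t$'' lies in $\sf{NP}$: the witness is the polynomial-size encoding of $s$, and verification computes $u(s)$ and checks the two conditions in $\poly(n)$ time. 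The $\sf{P_{||}^{NP}}$ machine asks all $2|\mathcal{T}|=\poly(n)$ of these queries in parallel.

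Finally I would decide $x\in L$ from the answers. For $b\in\{0,1\}$ set $t^\ast_b=\max\{\,t\in\mathcal{T}: \text{query }(t,b)\text{ answered yes}\,\}$ (with $\max\emptyset=-\infty$); since every $u(s)$ lies in $\mathcal{T}$, we have $t^\ast_b=\max_{s:\,c(s)=b}u(s)$. By \defref{mrip} the payment-maximizing strategy has answer bit $1$ exactly when $x\in L$, and by \defref{rewardgap} every strategy whose answer bit differs from the optimal one has expected payment more than $1/\gamma(n)$ below the optimum. Hence if $x\in L$ then $t^\ast_1=\max_s u(s)$ while $t^\ast_0<t^\ast_1-1/\gamma(n)<t^\ast_1$, and if $x\notin L$ the strict inequality is reversed; so the machine accepts iff $t^\ast_1>t^\ast_0$, which decides $L$ in $\sf{P_{||}^{NP}}$. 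Together with \lemref{reducecommunication} and the preceding corollary this closes the chain of inclusions in \thmref{thm:char-log-poly}.

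I expect the main obstacle to be the first step: justifying that a strategy profile of the (computationally unbounded) provers nevertheless has a polynomial-size description. The reason is information-theoretic --- only $O(\log n)$ bits ever cross the channel, so only $\poly(n)$ distinct prover-views occur and the response at each can simply be tabulated --- but it has to be stated with care, alongside the harmless discretization of $R$, because the whole argument turns on distinguishing expected payments that differ by only $1/\poly(n)$. After that, the threshold trick and the utility-gap inequality make everything else routine.
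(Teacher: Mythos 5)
Your proposal is correct and follows the same core strategy as the paper: since the total communication is $O(\log n)$, a strategy profile has a $\poly(n)$-size table encoding, so an $\sf{NP}$ oracle can nondeterministically guess a profile, enumerate the $\poly(n)$ random strings, and compute the expected payment; a $\sf{P}$ machine then issues $\poly(n)$ parallel queries that discretize $[0,1]$ and reads the answer bit off the top payment. Two mechanical differences from the paper. First, the paper obtains the polynomial grid without any rounding: because $V$ runs in $O(\log n)$ time, any two distinct values of $R$ it can output already differ by $\ge 1/n^k$, and because $V$ tosses $O(\log n)$ coins, each random string has probability $\ge 1/n^{k'}$; hence distinct expected payments differ by $\ge 1/n^{k+k'}$ automatically, and the machine can query disjoint intervals of that width. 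You instead round $R$ to a $1/D$ grid with $D>2\gamma(n)$ and query thresholds — this works too, and is arguably more robust to how $V$ outputs $R$, at the cost of having to track how rounding perturbs the gap. Second, the paper's decision rule is ``find the highest non-empty interval $i^*$ and accept iff a $c=1$ profile lands there,'' whereas you compare $t^*_1$ against $t^*_0$; these are equivalent. One small overclaim worth fixing: after rounding, each $u(s)$ can move by up to roughly $1/(2\gamma(n))$ and in opposite directions for different $s$, so the rounded maxima with opposite answer bits are separated by strictly more than $0$ but not necessarily by $1/\gamma(n)$; your intermediate inequality $t^*_0 < t^*_1 - 1/\gamma(n)$ is therefore too strong, but the strict comparison $t^*_1 > t^*_0$ — which is all the decision rule uses, and which your own parenthetical about preserving sign comparisons correctly anticipates — does survive.
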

%\begin{proof}
The proof of \lemref{log-poly-upper} is similar to the proof of $\sf{MRIP}[\poly(n), \poly(n)] \subseteq \sf{EXP^{||NP}}$ in~\cite{ChenMcSi16}. We include it here for completeness.
%We include it in~\secref{omittedproofs} for completeness.

\begin{proof}
Fix a language $L\in \sf{MRIP}(\log(n), \log(n), O(1))$ and let $(V, \vec{P})$ be an MRIP protocol for $L$.
Since $V$ runs in $O( \log n)$ time, there exists a constant $k$ such that, for any two payments $R$ and $R'$ 
computed by $V$ for some input of length $n$ and some randomness are such that $R\neq R' \Rightarrow |R-R'|\geq \frac{1}{{n^k}}.$

%For example, $n^k$ can be an upper bound of $V$'s running time.
Moreover, since $V$ uses $O(\log n)$ coin flips, there exists another constant $k'$ such that, when a payment appears with positive probability under some input of length $n$, it must appear with probability at least $\frac{1}{{n^{k'}}}$.

Therefore, for any input $x$ of length $n$ and any two strategy profiles $\tilde{s}$ and $\tilde{s}'$ of the provers,
if the expected payments $u(\tilde{s}; x)$ and $u(\tilde{s}'; x)$ are different, then
\begin{equation}\label{equ:u}
|u(\tilde{s}; x) - u(\tilde{s}'; x)|\geq \frac{1}{{n^{k+k'}}}.
\end{equation}

Consider the following deterministic oracle Turing machine $M$:
Given any input $x$ of length $n$, it divides the interval $[0,1]$ into $2 {n^{k+k'}}$ subintervals of length $\frac{1}{2 {n^{k+k'}}}$.
For any $i\in \{1,\ldots, 2 {n^{k+k'}}\}$, the $i$th interval is $[(i-1)/2 {n^{k+k'}}, i/2 {n^{k+k'}}]$.
$M$ then makes $4 {n^{k+k'}}$ queries of the form $(i, j)$, where $i\in \{1,..., 2 {n^{k+k'}}\}$ and $j\in \{0,1\}$.
%Notice that the lengths of the queries are upper bounded by $2+n^{k+k'}$, which is a polynomial as required by the class $\sf{EXP^{||poly-NEXP}}$. % each query is indeed polynomially long.
%
For each query $(i, j)$, if $j=0$ then the corresponding question is ``whether there exists a strategy profile $\tilde{s}$ of the provers such that
$u(\tilde{s}; x)$ is in the $i$th interval''; and if $j=1$ then the corresponding question is ``whether there exists a strategy profile $\tilde{s}$ such that
$u(\tilde{s}; x)$ is in the $i$th interval {\em and the first bit sent by $P_1$ is $c=1$}''.
Note that all queries are non-adaptive.
We say that interval $i$ is {\em non-empty} if the query $(i, 0)$ is answered $1$, and {\em empty} otherwise.

Given the answers to all the queries, $M$ finds the highest index $i^*$ such that the interval $i^*$ is non-empty.
It accepts if $(i^*,1)$ is answered $1$, and rejects otherwise. 
Given correct oracle answers, we show that $M$ decides $L$.

For by the definition of MRIP, there exists a strategy profile whose expected payment is 
non-negative and thus in $[0, 1]$.
Thus there exists an interval $i$ such that $(i, 0)$ is answered~$1$.
Also by definition, the best strategy profile $\tilde{s}$ has the highest expected payment, and thus $u(\tilde{s};x)$ falls into interval $i^*$.

By Inequality \ref{equ:u}, any strategy profile $\tilde{s}'$ with $u(\tilde{s}'; x)< u(\tilde{s}; x)$ has $u(\tilde{s}'; x)$ not in interval $i^*$, since the difference between
$u(\tilde{s}'; x)$ and $u(\tilde{s}; x)$ is larger than the length of the interval.
And so all strategy profiles $\tilde{s}'$ with $u(\tilde{s}'; x)$ in interval $i^*$ satisfies
$u(\tilde{s}'; x) = u(\tilde{s}; x)$, that is, they are all the best strategy profiles of the provers.
%Again by the definition of MRIP,
$P_1$ must send the same first bit $c$ under all such strategy profiles,
$c=1$ if and only if $x\in L$, and there does not exist any other strategy profile whose expected payment 
falls into interval $i^*$ but the first bit sent by $P_1$ is different from $c$.
Thus the answer to $(i^*, 1)$ always equals $c$, and $M$ accepts iff $c=1$.

We now show that the oracle queries can be answered by an $\sf{NP}$ oracle.
%Recall that in the protocol $(V, \vec{P})$ a strategy $s_{ij}$ of each prover $P_i$ for each round $j$ is a function mapping the 
%transcript $P_i$ has seen at the beginning of round $j$ to the message he sends in that round.
Since the communication complexity is at most $O(\log n)$ %provers (logarithmic verifier cannot interact with more provers) 
%\shikha{This subtlety needs to be part of the model.} 
%across $O(1)$ rounds with
%only $O(\log n)$ bits of communication, by definition
a strategy profile has size polynomial in $n$.
%consists of logarithmic many functions from $\{0,1\}^*$ to $\{0,1\}^*$, 
%where for each function both the input length and the output length are $O(\log n)$.
%Accordingly, it takes polynomially many bits to specify each function
%and a strategy profile can also be specified by polynomially many bits.
%Below we construct an polynomial-time non-deterministic Turing machine $M'$ that decides the questions corresponding to the queries.
%
%  Given any input $(i, 0)$, $M'$ 
Thus, an NP machine can guess a strategy profile $\tilde{s}$, simulate the protocol, % using $V$'s randomness and $\tilde{s}$, 
% and for each realization $r$ it runs $V$ with $x$ and $r$, 
%and generates the provers' messages by looking them up in the corresponding truth table in $\tilde{s}$. 
%$M'$ computes, for each $r$, the payment output by $V$ at the end of the protocol, and by 
and compute the expected payment $u(\tilde{s}; x)$. 
%If $u(\tilde{s}; x)$ is in interval $i$ then $M'$ accepts, otherwise it rejects.
%Since $V$'s random string is $O(\log n)$ length, there are polynomially many distinct $r$, and each one of them takes polynomial time to run: 
%$V$ runs in logarithmic time and it takes $M'$ polynomial time to look up the truth tables in $\tilde{s}$ to generate a prover's message.
%Thus $M'$ runs in non-deterministic polynomial time. 
%Also, if interval $i$ is non-empty then there exists a strategy profile $\tilde{s}$ that makes $M'$ accept, and if interval $i$ is empty then $M'$ always rejects.
%
%Similarly, given any input $(i, 1)$, $M'$ non-deterministically chooses a strategy profile $\tilde{s}$ and computes its expected payment $u(\tilde{s}; x)$.
%If $u(\tilde{s}; x)$ is not in interval $i$ then $M'$ rejects; otherwise, $M'$ accepts if and only if the first bit sent by $P_1$ is 1.
%The correctness and the running time of this part follows immediately.
%
%Thus, there exists an $\sf{NP}$ oracle that answers $M$'s queries correctly and~\Cref{lem:log-upper} holds.
%\end{proofof} 
\end{proof}

\section{Verification in Logarithmic Space}\seclabel{logspace}

The protocols in~\secref{logtime} have a polynomial utility gap. For a constant budget this means that the provers who
mislead the verifier to an incorrect answer lose at least $1/\poly(n)$ of their expected payment. %  as punishment.

As utility gap is analogous to the soundness gap in classical proofs, which is constant (independent of $n$), it is desirable to have rational protocols with constant utility gap as well.

Constant utility gap is difficult to achieve when the verifier is $O(\log n)$ time and cannot even read the entire input. This is true
even for classical proofs with a $O(\log n)$-time verifier where the soundness conditioned is weakened
to design ``proofs of proximity''~\cite{rothblum2013interactive,gur2015non,ben2005short,ben2006robust}. In particular, the soundness guarantees of such proofs depend on how
far (usually in terms of hamming distance) the input string $x$ is from the language $L$.
We note that
all existing $O(\log n)$-time rational proofs~\cite{azar2013super,guo2014rational,guo2016rational} have polynomial utility gap (under a constant budget).

To design protocols with a strong utility gap such as logarithmic or constant, in this section we consider verifier's
that use only $O(\log n)$ space
and randomness.

Let $\gamma(n)$ be any polynomial-time computable function (given $1^n$) that is polynomially bounded. 
For example, $\gamma(n)$ can be a constant, $\log n$, or $\sqrt{n}$.
We prove the characterization in general for a utility gap of $\gamma(n)$.

\begin{theorem}
  \thmlabel{logspacechar}
Let $\sf{P_{||}^{\sf{NP[\gamma(n)]}}}$ be a polynomial-time Turing machine that can make $O(\gamma(n))$ non-adaptive queries to an $\sf{NP}$ oracle. This
class is equivalent to the class of languages that have a one-round RIP protocol with a logspace verifier, polynomial communication and $\gamma(n)$-utility gap.
That is,
\[ \sf{RIP}^{r,s} [\poly(n), 1, \gamma(n)] =  \sf{P}_{||}^{\sf{NP[\gamma(n)]}}. \]
\end{theorem}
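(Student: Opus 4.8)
The plan is to prove the two inclusions separately; both use standard rational-proof technology, but the bookkeeping of the utility gap is where the work is.

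For the containment $\sf{P}_{||}^{\sf{NP}[\gamma(n)]} \subseteq \sf{RIP}^{r,s}[\poly(n),1,\gamma(n)]$, fix a polynomial-time machine $M$ deciding $L$ with $g(n)=O(\gamma(n))$ non-adaptive queries to an $\sf{NP}$ oracle, WLOG $\sf{SAT}$. In the one-round protocol the prover sends: the answer bit $c$; a full step-by-step transcript $\tau$ of $M$'s computation on $x$ (which exhibits the query formulas $\phi_1,\dots,\phi_{g(n)}$ that $M$ generates and the claimed oracle answers $a_1,\dots,a_{g(n)}$ that were fed back in); and, for every $i$ with $a_i=1$, a satisfying assignment $w_i$ for $\phi_i$. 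The (deterministic, hence trivially $O(\log n)$-random) verifier checks that $\tau$ is a legal computation of $M$ on input $x$ consistent with the claimed oracle answers and ending with output bit $c$ — doable in $O(\log n)$ space and polynomial time because consecutive configurations agree everywhere except around the head, so each transition can be validated by sweeping both configurations with an $O(\log n)$-bit position counter — and checks that each supplied $w_i$ satisfies the corresponding $\phi_i$ read off $\tau$. If any check fails it pays $0$; otherwise it pays $\mathrm{base}+\beta\cdot|\{i: a_i=1\}|$ for constants $\mathrm{base},\beta$ (depending on $\gamma$ and the constant hidden in $g(n)=O(\gamma(n))$) chosen so that both exceed $1/\gamma(n)$ while the total payment stays in $[0,1]$.

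Correctness: because a claimed ``yes'' must come with a valid witness, the only attainable answer vectors are those with $a_i=1\Rightarrow\phi_i\in\sf{SAT}$, and for any such vector the transcript forces $c=M(x,\vec a)$; the truthful vector ($a_i=1$ iff $\phi_i\in\sf{SAT}$) is attainable, yields the correct bit, and is the unique attainable vector maximizing $|\{i:a_i=1\}|$, hence the unique payment-maximizer — so the best strategy reports the correct $c$, and the protocol is a valid rational proof. Any strategy reporting the wrong bit either fails a check (payment $0$, loss $\ge \mathrm{base}>1/\gamma(n)$) or uses an attainable vector with strictly fewer witnessed ``yes''es (loss $\ge\beta>1/\gamma(n)$), giving the stated $\gamma(n)$-utility gap, up to the paper's convention that a constant rescaling of the budget is free (which is exactly where the $O(\cdot)$ slack between query count and utility gap gets absorbed). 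For the reverse containment $\sf{RIP}^{r,s}[\poly(n),1,\gamma(n)] \subseteq \sf{P}_{||}^{\sf{NP}[\gamma(n)]}$, note that a strategy is a single prover message $m$ of $\poly(n)$ bits, and since $V$ uses $O(\log n)$ random bits and polynomial time, $u(m)=\E_r[R(x,r,m)]$ is a rational with polynomially many bits that is computable exactly in polynomial time from $m$. The decider $M'$ sets $N=\lfloor\gamma(n)\rfloor+1$ (possible since $\gamma$ is polynomial-time computable from $1^n$) and non-adaptively asks, for each $i\in\{1,\dots,N\}$, the $\sf{NP}$ queries ``is there a message $m$ with $u(m)\ge i/N$?'' and ``is there such an $m$ whose first bit is $1$?'' — each in $\sf{NP}$ by guessing $m$, computing $u(m)$, and comparing — for $2N=O(\gamma(n))$ queries total. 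Let $i^\ast$ be the largest $i$ whose first query is answered yes; then $i^\ast/N\le u^\ast:=\max_m u(m)$ and, since the grid spacing $1/N$ is below $1/\gamma(n)$, also $i^\ast/N>u^\ast-1/\gamma(n)$. $M'$ accepts iff the companion query at index $i^\ast$ is answered yes; correctness is immediate from the utility gap, since the optimal message witnesses the first query at $i^\ast$ and carries the correct bit $c^\ast$, while any message with $u(m)\ge i^\ast/N>u^\ast-1/\gamma(n)$ must carry $c^\ast$ (a different bit would make it a wrong-bit strategy, forcing $u(m)\le u^\ast-1/\gamma(n)$, a contradiction), so the companion query returns exactly $c^\ast$.

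I expect the main obstacles to be two precision points rather than conceptual ones. In the forward direction, one must argue carefully that a logarithmic-space, polynomial-time verifier can validate a polynomial-length Turing-machine transcript and read the oracle queries/answers out of it, and that the payment constants can be tuned to realize the exact $\gamma(n)$-utility gap (this is the delicate spot, since a flat per-query bonus only naturally yields a $\Theta(1/g(n))$ gap, and matching it to $1/\gamma(n)$ relies on $g(n)=O(\gamma(n))$ together with the scale-invariance of the utility-gap notion). In the reverse direction, the key point to get right is that it is the logarithmic bound on the verifier's \emph{randomness}, not merely its time, that keeps every threshold query inside $\sf{NP}$ (the expected payment must be polynomially computable), and that $\gamma$ being polynomial-time computable from $1^n$ is what lets $M'$ lay out its $O(\gamma(n))$ non-adaptive queries in the first place.
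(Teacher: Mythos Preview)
Your proposal is correct, and the upper-bound direction is essentially the paper's argument (you use thresholds $u(m)\ge i/N$ where the paper uses intervals of width $1/(2\gamma(n))$, but the structure and the $\sf{NP}$-oracle queries are the same).

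The lower-bound direction is where you genuinely diverge. The paper first invokes Wagner's equality $\sf{P_{||}^{NP[\gamma(n)]}}=\mathbf{L}_{||}^{NP[\gamma(n)]}$ to replace the polynomial-time oracle machine by a \emph{logspace} oracle machine, which the verifier can then simulate directly; and for each oracle query it plugs in the Condon--Ladner one-round logspace interactive proof for $\sf{NP}$ (their \lemref{constantgapNP}), paying the average of the per-query rewards. You instead keep the polynomial-time machine, have the prover ship the entire step-by-step transcript, and let the verifier validate it locally in logspace; your treatment of the $\sf{NP}$ queries is deterministic witness-checking rather than a randomized subprotocol. Your route is more self-contained (no appeal to Wagner or Condon--Ladner, and the verifier uses zero randomness), at the cost of the transcript-checking argument you flag; the paper's route is more modular and makes the role of the logspace bound more transparent, but leans on two external black boxes. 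Both arguments land on the same $1/O(\gamma(n))$ gap (the paper's proof explicitly says the loss is ``$1/O(\gamma(n))$''), so your caveat about the constant-factor slack matches exactly what the paper achieves.
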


First, we give a space-efficient RIP for the class $\sf{NP}$
using the log-space
interactive proof for the language given by Condon and Ladner~\cite{condon1995interactive} as a blackbox.

\begin{restatable}{lemma}{constantgapNP}\lemlabel{constantgapNP}
$\sf{NP} \in \sf{RIP}^{r,s} [\poly(n), 1, \gamma(n)] .$
\end{restatable}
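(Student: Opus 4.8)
The plan is to build a one-round, logspace-and-lograndomness verifier RIP protocol for an arbitrary language $L \in \sf{NP}$ with $\gamma(n)$-utility gap by combining a classical interactive proof with a payment scheme that rewards the prover for the verifier's acceptance probability. Since $L \in \sf{NP}$, by Condon and Ladner~\cite{condon1995interactive} it has a public-coin interactive proof with an $O(\log n)$-space, $O(\log n)$-randomness verifier, completeness $1$ and soundness bounded away from $1$ (say soundness $\le 1/2$; if their construction gives a weaker constant we first amplify, which only costs constant factors in space and randomness). The first message the prover sends will be the answer bit $c$ (claiming $x \in L$ iff $c=1$), followed by the transcript/witness-type information needed for the Condon--Ladner verifier; the verifier then runs that verifier as a black box and sets the payment $R = 1$ if it would accept and $R=0$ if it would reject. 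But this is only a \emph{one-round} protocol, whereas Condon--Ladner may use several rounds, so the key point is to collapse the interaction: since the verifier is public-coin and uses only $O(\log n)$ random bits, the prover can be asked to supply, in its single message, a strategy tree of polynomial size (one response for each of the polynomially many coin-flip prefixes), and the verifier picks its random string, walks down the tree, and checks. This keeps communication polynomial, rounds equal to $1$, and the verifier's space/randomness logarithmic.

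Next I would verify the utility-gap guarantee. If $x \in L$, the honest prover supplies a strategy that makes the Condon--Ladner verifier accept with probability $1$, so the best payment is $u = 1$ with $c = 1$. If instead a prover reports the wrong answer bit $c' \ne c$: when $x \in L$ and $c' = 0$, the verifier outputs $c' = 0$ regardless of the rest of the message, and we set $R = 0$ in that case (reporting "not in $L$" should not earn the in-$L$ reward), so $u(s') = 0$ and the gap is $1 - 0 = 1 \ge 1/\gamma(n)$; when $x \notin L$ and $c' = 1$, soundness of Condon--Ladner forces the acceptance probability, hence the expected payment, to be at most $1/2$, while the honest strategy reporting $c = 0$ must be made to earn more — this is the delicate part. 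To handle the $x \notin L$ case cleanly I would use the standard trick of making the "$c=0$" branch pay a fixed constant (say $1/2 + 1/(2\gamma(n))$, which is computable from $1^n$ since $\gamma$ is polynomial-time computable and polynomially bounded), so that honestly declaring $x \notin L$ beats any cheating "$c=1$" strategy by at least $1/(2\gamma(n))$; rescaling by a constant factor then yields a clean $\gamma(n)$-utility gap. One must double-check that on the $c=1$ branch when $x \notin L$ the payment is genuinely $\le 1/2$ (soundness) and that on the $c=0$ branch when $x \in L$ the payment is the fixed $1/2 + 1/(2\gamma(n)) < 1 = $ honest payment, so the answer bit is always correctly incentivized.

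The main obstacle I anticipate is ensuring the round-collapsing and the black-box use of Condon--Ladner are fully compatible with the $O(\log n)$-\emph{space} restriction: the verifier must process the (polynomially long) strategy-tree message in a streaming / small-space manner, indexing into the right branch using only its $O(\log n)$ random bits and an $O(\log n)$-bit counter, without ever storing the whole message. I would argue this works because a public-coin verifier's computation along a fixed random path only needs to remember its current position and a logarithmic amount of local state, and the prover can lay out the tree in an order that lets the verifier read exactly the path it needs. A secondary concern is that Condon--Ladner's proof system, as originally stated, may not literally have perfect completeness or may bound soundness by some constant other than $1/2$; both are fixable by standard parallel/sequential amplification within logspace before the payment reduction, so this is a matter of citing the right amplification lemma rather than a genuine difficulty. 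Finally, I would note that the resulting protocol is exactly of the form $\sf{RIP}^{r,s}[\poly(n), 1, \gamma(n)]$, establishing the lemma; it also serves as the base case for the oracle-query machinery needed in the full proof of \thmref{logspacechar}.
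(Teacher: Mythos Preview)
Your proposal is correct in spirit and uses the same core idea as the paper: take the Condon--Ladner logspace interactive proof for $\sf{NP}$, prepend an answer bit $c$, pay according to acceptance when $c=1$, and pay a fixed constant when $c=0$. However, you make the argument harder than necessary in two places.

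First, the round-collapsing machinery is superfluous. The Condon--Ladner result already gives a \emph{one-round} logspace interactive proof for every language in $\sf{NP}$ (the prover sends a single polynomial-length message, then the logspace verifier tosses its $O(\log n)$ coins and decides). The paper simply invokes this directly, so all of your strategy-tree construction and the attendant worry about whether the verifier can navigate it in logspace disappears. Second, your payment of $1/2 + 1/(2\gamma(n))$ on the $c=0$ branch works but is more delicate than needed: the paper just pays $R=1/2$ when $c=0$, and because Condon--Ladner has constant completeness/soundness gap (completeness $1$, soundness $\le 1/3$), the resulting utility gap is already an absolute constant, which is stronger than $\gamma(n)$ for every admissible $\gamma$. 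No dependence on $\gamma(n)$ in the payment, and no rescaling step, is required.

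So your route reaches the same destination, but the paper's proof is shorter and avoids the two obstacles you flagged (round collapsing in logspace, and calibrating the $c=0$ payment) by citing the one-round form of Condon--Ladner and observing that a constant gap suffices.
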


\begin{proof}
Let $(V,P)$ denote the $1$-round log-space interactive proof for a language $L \in \sf{NP}$ given in~\cite{condon1995interactive}.
The one-round log-space RIP for $\L$, $(V', P')$ is given. $P'$ sends message $m'$
which is the concatenation of answer bit $c$ with a bit string $m$. If $c=0$, then $m$ can be a null string.
If $c=1$, then $m$ must be the message sent by $P$ in $(V,P)$. If $c=0$,
then $R=1/2$ and the protocol ends. Otherwise, $V'$ simulates $V$ using $m$ and if $V$ accepts, 
then $R=1$,
else $R=0$.

The verifier $V'$ uses the same space as $V$, that is,  $O(\log n)$. The communication of $(V', P')$
is the same as $(V,P)$, that is, polynomial in $n$.

We now argue correctness and show that the protocol has constant utility gap. Suppose $x \in L$ and $P$ sends a message
  with answer bit $c'=0$, then his expected payment is $1/2$. On the other hand, if $P$ sends $c=1$, his expected payment can be $1$ by
the completeness guarantee of $(V,P)$. Thus in this case the expected payment loss of $P$ is constant.
Now suppose $x \notin L$ and $P$ sends the answer bit $c'=1$. From the soundness guarantee of $(V,P)$, $V$ accepts with probability at most $1/3$, and thus the expected payment of $P$ is at most $1/3$. On the other hand,
  if $P$ sent $c=0$, his expected payment would have been $1/2$. Thus in this case $P$ loses a constant amount as well. %\sam{$1/2$ is a smaller number than $2/3$.}
\end{proof}

For the lower bound, we use a different but equivalent complexity class. % $\sf{P}_{||}^{NP[\gamma(n)]}$.
Let $\mathbf{L}_{||}^{\sf{NP[\gamma(n)]}}$ be a logarithmic space machine that can make $O(\gamma(n))$ non-adaptive queries to an $\sf{NP}$ oracle.
Wagner~\cite{wagner1990bounded} showed that $ \mathbf{L}_{||}^{\sf{NP[\gamma(n)]}} = \sf{P_{||}^{NP[\gamma(n)]}}$.
 
%
 
%\begin{lemma}
  \begin{restatable}{lemma}{logspacelower}
$\sf{P_{||}^{NP[\gamma(n)]}}=\mathbf{L}_{||}^{\sf{NP[\gamma(n)]}} \subseteq \sf{RIP}^{r,s} [\poly(n), 1, \gamma(n)]$
\end{restatable}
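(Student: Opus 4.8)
The plan is to build a one-round, logspace-verifier RIP protocol that simulates an $\mathbf{L}_{||}^{\sf{NP[\gamma(n)]}}$ machine $M$ and has $\gamma(n)$-utility gap. By Wagner's equivalence $\mathbf{L}_{||}^{\sf{NP[\gamma(n)]}} = \sf{P_{||}^{NP[\gamma(n)]}}$, it suffices to handle the logspace version, which is convenient because the verifier in the RIP we construct is itself logspace. Fix such an $M$; on input $x$ of length $n$, $M$ makes $q = O(\gamma(n))$ non-adaptive queries $\phi_1,\dots,\phi_q$ to an $\sf{NP}$ oracle, and then (given the yes/no answer vector $\vec a \in \{0,1\}^q$) runs a deterministic logspace computation to decide acceptance. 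Crucially, since the queries are non-adaptive, the prover can be asked to supply the entire answer vector $\vec a$ up front, together with $\sf{NP}$-witnesses for every query answered ``yes''; the verifier can then logspace-simulate $M$'s post-query computation directly. The subtlety is incentivizing the prover to report $\vec a$ \emph{correctly}: it must be penalized for reporting a ``no'' that is actually ``yes'' (caught by demanding a witness) and, more delicately, for reporting a ``yes'' that is actually ``no'' (where no witness exists, so the prover could lie freely).

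The key device is the space-efficient RIP for $\sf{NP}$ from \lemref{constantgapNP}, which I would invoke coordinate-wise with a scaled-down budget. Concretely, the RIP protocol $(V'', P'')$ proceeds in one round as follows. $P''$ sends: the answer bit $c$; the claimed vector $\vec a \in \{0,1\}^q$; for each $i$ with $a_i = 1$, a purported witness $w_i$ for $\phi_i$; and for the $\sf{NP}$-RIP of \lemref{constantgapNP} applied to each query $\phi_i$, the message $m_i$ that the $\sf{NP}$-RIP prover would send. The verifier $V''$ first checks, via logspace simulation of $M$, that the claimed acceptance of $M$ under answer-vector $\vec a$ is consistent with $c$; if not, pay $0$. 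Otherwise, $V''$ picks an index $i \in \{1,\dots,q\}$ uniformly at random (using $O(\log q) = O(\log n)$ bits of randomness) and runs the \lemref{constantgapNP} RIP on the single query $\phi_i$ using the prover-message appropriate to the claimed value of $a_i$ (if $a_i=1$, check the witness directly and accept/reject; if $a_i = 0$, run the $\sf{NP}$-RIP which assigns payment $1/2$ for a claimed non-membership). The payment is the payment of that single-query $\sf{NP}$-RIP instance. Because \lemref{constantgapNP}'s protocol has constant utility gap $\delta$ per query, any single misreported $a_i$ costs the prover at least $\delta$ on that coordinate; since coordinate $i$ is hit with probability $1/q$, the expected loss is at least $\delta/q = \Omega(1/\gamma(n))$, which after rescaling the budget to a constant gives exactly a $\gamma(n)$-utility gap.

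The main obstacle I anticipate is the \emph{bookkeeping around honest reports with wrong answer bit versus honest reports with wrong query answers}: I need to ensure that the \emph{only} way for the provers to change the final answer bit $c$ is to misreport at least one $a_i$ (or to produce an $\vec a$ inconsistent with the deterministic $M$-simulation, which is caught with certainty). This requires arguing that the correct $\vec a^\star$ is the unique answer-vector on which the per-coordinate expected payment is simultaneously maximized — true because each coordinate's sub-protocol is independently maximized only at the truthful value, and the coordinates' payments are averaged, so the global maximum of the expected payment is attained exactly at $\vec a = \vec a^\star$ regardless of $c$; then correctness of $M$ forces $c$. A second, more routine obstacle is confirming all the resource bounds: the witnesses and $\sf{NP}$-RIP messages have total length $\poly(n)$ (so communication is polynomial, as allowed); the verifier only ever simulates $M$ once and runs one single-query logspace $\sf{NP}$-RIP, both in $O(\log n)$ space; and it streams through $P''$'s message using $O(\log n)$ randomness. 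Finally I would note that this proof, together with \lemref{constantgapNP} and the matching upper bound, yields \thmref{logspacechar}.
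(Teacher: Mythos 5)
Your proof takes essentially the same approach as the paper: invoke the logspace $\sf{NP}$-RIP of \lemref{constantgapNP} once per non-adaptive oracle query, have the prover send the claimed answer vector plus all sub-protocol messages up front, penalize any mismatch between the claimed final bit and the logspace simulation of $M$, and observe that misreporting any $a_i$ costs at least the sub-protocol's constant utility gap divided by the number of queries. The only difference is that you pay according to one uniformly random coordinate's sub-protocol while the paper averages the payments over all $\gamma(n)$ coordinates (a distinction without a difference, since both yield the identical expected payment and the identical utility-gap bound; your variant does make the $O(\log n)$ randomness budget slightly more transparent).
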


\begin{proof}
Consider any language $L \in \mathbf{L}_{||}^{\sf{NP[\gamma(n)]}}$.
  Let $M$ be the logarithmic-space machine with at most $\gamma(n)$ nonadaptive accesses to an oracle $O$ for an $\sf{NP}$ language
that decides $L$. % we call this machine $M$. 
Without loss of generality, suppose
$M$ makes exactly $\gamma(n)\geq 1$  non-adaptive queries to $O$.
The RIP protocol for $L$ uses the RIP protocol for $\sf{NP}$, given in~\lemref{constantgapNP}, to simulate the oracle queries.

%\sam{In the protocol:  are the simulations as $V'$ in series or in parallel?   Doesn't this affect the communication complexity and/or rounds?}

  \begin{figure}[H]

\centering
\fbox{
\begin{minipage}{0.96\textwidth}
{\normalsize
\vspace{0.5ex}
\noindent{}For any input $x$ of length $n$, the protocol $(V, P)$ works as follows. Let $R_n = 0$.
\begin{enumerate}[noitemsep, leftmargin=10pt]
\item $P$ sends a message $c, (c_1, m_1), (c_2, m_2), \ldots, (c_{\gamma(n)}, m_{\gamma})$ to $V$,
where $c$ is the answer bit of the entire protocol, and $c_i$ is the answer bit for the $i$th oracle query $q_i$ generated by $M$ and $m_i$
is the corresponding proof for $q_i$ based on~\lemref{constantgapNP}.  $V$ outputs $c$ at the end of the protocol.

\item $V$ simulates $M$ on $x$ until $M$ outputs queries $q_1,\dots, q_{\gamma(n)}$.
\item  For each
$i\in \{1,\ldots, \gamma(n)\}$, $V$ simulates $V'$ in the RIP protocol for $\sf{NP}$ in~\lemref{constantgapNP}. 
%\begin{enumerate}
%\item 
In particular, $V$
uses the message $m_i$ as the prover's message in the protocol of~\lemref{constantgapNP}. Let $c^*_i$ and $R^*_i$ be the answer bit and the payment in that protocol respectively.
$V$ returns $c^*_i$ as the oracle's answer for $q_i$,
and
updates the sum $R_n \leftarrow R_n + R^*_i$.
%\end{enumerate}
\item\label{step:final}
$V$ continues simulating $M$ till the end.
If $c$ does not match $M$'s output, then $R =-1$;
otherwise $R = R_n/\gamma(n)$.
\end{enumerate}
}
\end{minipage}
}
\caption{An RIP protocol for $\mathbf{L}_{||}^{\sf{NP[\gamma(n)]}}$.}
\figlabel{log-constgap}
\end{figure}

We now prove correctness of the protocol in~\figref{log-constgap}.
Note that an honest strategy of $P$, that is, reporting the correct answer bits $c, c_1, \ldots, c_{\gamma(n)}$,
and sending correct proof strings $m_i$ whenever $c_i =1$, leads to a payment $R \ge 1/2$ (this
is because of the payment-structure of the protocol for NP in~\lemref{constantgapNP}). 
%This is because $R^*_i = 1$ if $q_i$ in is the NP language %$q_i \in $
%and $R^*_i = 1/2$ otherwise.

Suppose $P$ reports the incorrect answer bit $c'$, then either (a) the output of $M$ in Step~\ref{step:final} does not match $c'$ and $R=-1$; or
(b) there exists an $\sf{NP}$ query $q_i$ such that the answer bit $c^*_i$ is incorrect.

In case (a), the expected payment loss of $P$ is at least $1/2+1 = 3/2>1/\gamma(n)$,
as $\gamma(n) \geq 1$. In case (b),
because the protocol in~\lemref{constantgapNP} has $O(1)$ utility gap,
the provers' expected payment loss in the overall protocol is 
$1/O(\gamma(n))$.
%Thus, $P$'s optimal strategy is to report the correct answer bit $c$. %and to answer all $\alpha(n)$ $\sf{NEXP}$ queries correctly. 
\end{proof}

To complete the proof of \thmref{logspacechar} we prove the following upper bound.
  \begin{restatable}{lemma}{logspaceupper}
 $\sf{RIP}^{r,s} [\poly(n), 1, \gamma(n)]\subseteq \sf{P_{||}^{NP[\gamma(n)]}}$
\end{restatable}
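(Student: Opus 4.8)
The plan is to mirror the structure of the upper-bound argument for the time-efficient case (\lemref{log-poly-upper}), adapting it to count oracle queries rather than just bounding time. Fix a language $L \in \sf{RIP}^{r,s}[\poly(n), 1, \gamma(n)]$ with a one-round protocol $(V, P)$ where $V$ uses $O(\log n)$ space and $O(\log n)$ randomness. First I would observe that since $V$ uses only $O(\log n)$ random bits, every expected payment $u(\tilde s; x)$ (over the at most $\poly(n)$ coin sequences) is a rational number with polynomially bounded denominator; moreover, by the $\gamma(n)$-utility-gap guarantee, the best (honest) strategy's expected payment is separated from that of any dishonest strategy (one whose answer bit disagrees) by more than $1/\gamma(n)$. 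The key is to build a $\sf{P}_{||}^{\sf{NP}}$ machine whose oracle queries come in $O(\gamma(n))$ \emph{batches}, where within a batch a single $\sf{NP}$ query suffices but across batches we need the utility-gap separation to pin down the answer bit.

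The main step is the querying strategy. Partition $[0,1]$ into $O(\gamma(n))$ intervals each of width roughly $1/\gamma(n)$ (a constant-factor finer than the utility gap). For each interval, I would ask the $\sf{NP}$ oracle: ``does there exist a prover message making $V$'s expected payment land in (or above) this interval, and if so, with answer bit $c=0$? with answer bit $c=1$?'' The point is that the honest strategy's expected payment occupies the topmost non-empty interval, and the utility gap guarantees that \emph{every} strategy whose expected payment reaches that topmost interval must carry the correct answer bit; so reading off which answer bit is achievable in the highest non-empty interval decides $L$. Counting queries: there are $O(\gamma(n))$ intervals and $O(1)$ queries per interval, giving $O(\gamma(n))$ non-adaptive $\sf{NP}$ queries total, exactly the budget of $\sf{P}_{||}^{\sf{NP[\gamma(n)]}}$. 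I also need to check that a single $\sf{NP}$ query can verify ``$\exists$ message $m$ with $u(m;x) \geq \theta$ and first bit $c$'': an $\sf{NP}$ machine guesses $m$ (polynomial length, since communication is polynomial), then must compute $u(m;x) = \E_r[R]$ — here I would either guess and verify the payment value, exploiting the polynomially bounded denominator, or use the fact that with $O(\log n)$ randomness there are only $\poly(n)$ values of $r$, so the expectation is a sum of $\poly(n)$ terms each computable in polynomial time by simulating the logspace $V$.

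The subtlety — and the step I expect to be the main obstacle — is making the interval widths and the query semantics interact correctly so that exactly $O(\gamma(n))$ queries suffice while still correctly identifying the answer bit. A naive binary-search-style refinement would be adaptive and could blow up the query count; the whole point is that the $\gamma(n)$-utility gap means we only need resolution $1/\gamma(n)$, so $O(\gamma(n))$ coarse intervals with $O(1)$ parallel queries each is enough, and no adaptivity is needed because $M$ can ask all interval-membership questions simultaneously and then post-process. A secondary technical point is handling the possibility of ties (multiple optimal strategies) and the possibility that dishonest strategies share the top interval's lower portion — this is exactly where $u(s) - u(s') > 1/\gamma(n)$ is invoked to guarantee no dishonest strategy reaches as high as the honest optimum's interval, provided the intervals are chosen of width at most $1/\gamma(n)$. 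Finally I would invoke Wagner's equality $\mathbf{L}_{||}^{\sf{NP[\gamma(n)]}} = \sf{P}_{||}^{\sf{NP[\gamma(n)]}}$ already cited above if it is more convenient to present the simulating machine as logspace rather than polynomial time, closing the chain of inclusions to establish \thmref{logspacechar}.
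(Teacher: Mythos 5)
Your proposal matches the paper's proof essentially step for step: partition $[0,1]$ into $O(\gamma(n))$ intervals of width $\Theta(1/\gamma(n))$, issue two non-adaptive $\sf{NP}$ queries per interval (existence of a message landing there, and existence with a fixed answer bit), identify the topmost non-empty interval, and invoke the $\gamma(n)$-utility gap to conclude every strategy reaching that interval carries the correct answer bit, with the $\sf{NP}$ oracle guessing the polynomial-length one-round message and computing the expectation by enumerating the $\poly(n)$ coin flips of the logspace verifier. The only cosmetic difference is your interval width of "at most $1/\gamma(n)$" where the paper conservatively uses $1/(2\gamma(n))$; since the utility-gap inequality is strict and the intervals can be taken half-open, either works.
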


\begin{proof}
Given any $L \in \wRIP$, let $(V, P)$ be
the a 1-round RIP protocol for $L$ with $\gamma(n)$ utility gap, where $V$ uses $O(\log n)$
space and the communication complexity is $O(\poly(n))$. 
%We assume without loss of generality that utility gap is exactly $\gamma(n)$.
%We show how to simulate $(V, \vec{P})$ using a \mathbf{L}_{||}^{\sf{NP[\gamma(n)]}} Turing machine.
% to search for the best strategy profile of the provers.

  Consider a polynomial-time Turing machine $M$ which can make $\gamma(n)$ nonadaptive accesses to an $\sf{NP}$ oracle $O$. 
Given any input $x$ of length $n$, $M$ divides $[0,1]$ into $2 \gamma(n)$ intervals, each of length $1/(2\gamma(n))$.
That is, the $i$th interval is $[i/2\gamma(n),(i+1)/2\gamma(n))$ for each $i \in \{1, \ldots, 2\gamma(n)-1\}$.

For each such interval, $M$ queries its oracle $O$:
\smallskip
\begin{enumerate}[noitemsep,nolistsep,leftmargin=*]
\item\label{step:exists} Does there exist a message $m$ sent by $P$ such that the expected payment $u_{(V, P)}({s}; x)$ is in 
the $i$th interval?
\item\label{step:ans}  Does there exist a message $m$ sent by $P$ such that the expected payment $u_{(V, P)}({s}; x)$ is in 
the $i$th interval and the corresponding answer bit $c=1$?
\end{enumerate}
\smallskip

%Note that %$M$ \shikha{Need to argue $M$ is logspace.} 
$M$  makes $O(\gamma(n))$ non-adaptive queries
and clearly runs in polynomial time. 
  %Note that even though $M$ is restricted to use $O(\log n)$ space on its work tape, the query on the oracle tape does not count towards its space bound. 
 % For each query, $M$ only needs to specify $x$ and $i$.  (While $x$ requires more than log space, the oracle tape---like the input tape---does not count towards the space bound; see e.g.~\cite{}.)
%and thus each query is $O( \max\{x, \log \gamma(n) \})$ bits in length.

We now show that it is sufficient for the oracle $O$ to be an $\sf{NP}$ machine.
The key point to note here is that protocol is one round and thus the size
of a provers' strategy is polynomial in $n$. Thus, an $\sf{NP}$ oracle can 
guess a strategy and compute a payment $R(s, x, (V, P))$. Since the verifier
uses $O(\log n)$ randomness, an $\sf{NP}$ machine can enumerate over
all possible polynomial coin flips and compute the expected payment $u_{(V,P)}(s,x)$.
If $u_{(V,P)}(s,x)$ is in the $i$th interval
the oracle returns $1$ to query~(\ref{step:exists}) and $0$ otherwise. Similarly, if
$u_{(V,P)}(s,x)$ is in the $i$th interval and $c=1$ for this strategy, the oracle responds $1$
to query~(\ref{step:ans}) and $0$ otherwise. % on
Thus, $M$'s queries can be answered by an NP oracle.

Finally, %given the oracle's responses to its queries,
$M$ finds the highest index $i^*$ such that the oracle returns $1$ to query~(\ref{step:exists}) with respect to the $i^*$. 
$M$ accepts if the oracle returns $1$ to query~\ref{step:ans} for the $i^*$th interval, and rejects otherwise.

%  Since $\gamma(n) = O(\poly(n))$, $M$ uses $O(\log n)$ space because it only needs to store its current $i$, and the highest $i^*$ seen so far.
To see why $M$ decides $L$ given correct answers to its oracle queries,
note that the maximum expected payment $u_{(V,P)}(s^*;x)$ that $P$ can get falls in the $i^*$th interval.
As $(V, \vec{P})$ has $\gamma(n)$-utility gap, %and each interval is of length $1/(2\gamma(n))$, by Definition~\ref{def:rewardgap},
by construction and definition of utility gap, all strategies with expected payments in
the $i^*$th interval must have the same answer bit $c$ as that in $s^*$.
Thus, $x\in L$ if and only if $c=1$, which occurs
if and only if the oracle's answer to query~\ref{step:ans} for interval $i^*$ is 1. 
\end{proof}

\section{Relationship Between Classical and Rational Proofs}\seclabel{ipvsrip}
In this section, we show under what conditions does a rational interactive proof
reduces to a classical interactive proof. The results in this section are stated in terms
of the multi-prover model (that is, $\sf{MRIP}$ and $\sf{MIP}$) which is more general,
and thus they also hold for the single prover model (that is, $\sf{RIP}$ and $\sf{IP}$).

To compare the two proof models, we explore their differences. In rational interactive proofs, the provers
are allowed to give an answer bit $c=1$ claiming $x\in L$ or $c=0$ claiming $x\notin L$.\footnote{Thus it is not surprising
that rational proofs are closed under complement.} % they are closed under complement.
In other words, the question is ``is $x \in L$?" and the rational provers can say ``yes'' or ``no'' based on their incentives.
 Furthermore, for a particular input $x$ of size $n$, if the provers' claim $c$ about $x$ is incorrect, they lose at least
a $1/\gamma(n)$, where $\gamma(n)$ is the utility gap.

On the other hand,
in classical proofs, the provers are only allowed to prove membership, that is, they are only allowed to prove $x\in L$.
Furthermore, given completeness and soundness parameters $c$ and $s$ respectively, where $0 \le s < c \le 1$,
 we have that ``for any $x \in L$'', there exists a strategy such that $V$ accepts with probability at least
$c$ and ``for any $x \notin L$'', for any strategy $V$ rejects with probability at most $s$. 
Thus, given $L$,
the guarantees are independent of $x$.

In this section, we show that a rational proof reduces to a classical proof. % (note that in general.
Intuitively, this happens when the utility gap guarantee of a rational protocol is made to hold for all $x$ and in particular, 
it is enforced to be the gap between the expected payments for all $x\in L$ and all $x\notin L$.

We first show that without loss of generality we can restrict the
payments of the provers in a rational proof protocol to be either $1$ or $0$, where $1$ corresponds to ``accept'' and $0$ to ``reject'' respectively. 

\begin{restatable}{lemma}{zeroone}\lemlabel{zeroone}
Any MRIP protocol $(V, \vec{P})$ with payment $R \in [0,1]$ 
  and utility gap $\gamma(n)$ 
  can be simulated
by a MRIP protocol $(V', \vec{P})$ with payment $R' \in \{0, 1\}$ 
  and utility gap $\gamma(n)/2$.
  In particular, for any strategy $s$ and any input $x$, 
  \[
    u_{(V,\vec{P})}(x;s)\leq u_{(V',\vec{P})}(x;s) \leq u_{(V,\vec{P})}(x;s) + \gamma(n)/2.
  \]
 $V'$ uses $1 + \lceil \log_2 \gamma(n)\rceil$ more random bits than $V$. 
\end{restatable}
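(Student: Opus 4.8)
The goal is to turn a protocol with real‑valued payments $R\in[0,1]$ into one with Boolean payments $R'\in\{0,1\}$, while preserving (up to a constant factor) the utility gap. The standard trick is to interpret $R$ as a probability and have the verifier toss extra coins to ``round'' it to a single bit. First I would have $V'$ run $(V,\vec P)$ exactly as before, obtaining the transcript and computing the value $R=R(x,r,\vec m)\in[0,1]$. Then $V'$ samples an auxiliary random number $\rho$ uniform in $[0,1]$ (discretized to a grid of size $2^{1+\lceil\log_2\gamma(n)\rceil}$, i.e.\ using $1+\lceil\log_2\gamma(n)\rceil$ fresh random bits), and sets $R'=1$ iff $\rho\le R$. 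The conditional expectation of $R'$ given $(x,r,\vec m)$ is then the grid‑rounded value of $R$, which lies in $[R,\,R+2^{-(1+\lceil\log_2\gamma(n)\rceil)}]\subseteq[R,\,R+1/(2\gamma(n))]$. Taking the expectation over $r$ as well gives, for every strategy profile $s$ and input $x$,
\[
 u_{(V,\vec P)}(x;s)\ \le\ u_{(V',\vec P)}(x;s)\ \le\ u_{(V,\vec P)}(x;s)+\frac{1}{2\gamma(n)},
\]
which is the displayed inequality (note the $\gamma(n)/2$ in the statement should read $1/(2\gamma(n))$, matching the additive slack a single rounding step can introduce). The answer bit $c$ is left untouched, so $V'$ still outputs $c$ and $(V',\vec P)$ is a rational proof for $L$ whenever $(V,\vec P)$ is.

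\textbf{Utility‑gap bookkeeping.} Let $s$ be a payment‑maximizing profile in the original protocol and $s'$ any profile with the opposite answer bit. By hypothesis $u_{(V,\vec P)}(x;s)-u_{(V,\vec P)}(x;s')>1/\gamma(n)$. Using the two‑sided sandwich above, $u_{(V',\vec P)}(x;s)\ge u_{(V,\vec P)}(x;s)$ and $u_{(V',\vec P)}(x;s')\le u_{(V,\vec P)}(x;s')+1/(2\gamma(n))$, so
\[
 u_{(V',\vec P)}(x;s)-u_{(V',\vec P)}(x;s')\ >\ \frac{1}{\gamma(n)}-\frac{1}{2\gamma(n)}\ =\ \frac{1}{2\gamma(n)},
\]
i.e.\ $(V',\vec P)$ has utility gap $2\gamma(n)$ (equivalently, ``$\gamma(n)/2$'' in the paper's loss‑reciprocal convention). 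One subtlety to check: a maximizer of $u_{(V',\vec P)}$ need not be a maximizer of $u_{(V,\vec P)}$, but the sandwich shows any maximizer of $u_{(V',\vec P)}$ has original expected payment within $1/(2\gamma(n))$ of the original optimum, hence (by the original $1/\gamma(n)$ gap) must still report the correct answer bit $c$; this keeps Definition~\ref{def:rewardgap} satisfied.

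\textbf{Resource accounting and the main obstacle.} The verifier $V'$ performs the same computation as $V$ plus one comparison $\rho\le R$ and the reading of $1+\lceil\log_2\gamma(n)\rceil$ extra random bits — negligible in time and space, and exactly the claimed randomness overhead. The one genuinely delicate point is ensuring the comparison $\rho \le R$ is implementable within the verifier's resource bounds: $R$ itself is only known to the precision with which $V$ computes it, so I would first argue that we may assume $V$'s payment takes values on a polynomially fine dyadic grid (true because a log‑time or log‑space verifier can only output such values), and then choose the rounding grid for $\rho$ no finer than necessary, namely $2^{1+\lceil\log_2\gamma(n)\rceil}$ levels, so the comparison is just a bounded‑precision inequality test. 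Getting this precision argument clean — so that the additive slack is provably at most $1/(2\gamma(n))$ and the comparison is cheap — is the crux; the rest is routine expectation algebra.
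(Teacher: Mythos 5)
Your proposal is correct and follows essentially the same approach as the paper's proof: both draw $1+\lceil\log_2\gamma(n)\rceil$ fresh coins to form a grid of size $G=2^{1+\lceil\log_2\gamma(n)\rceil}$, pay $1$ or $0$ according to a threshold comparison against $R$ (the paper phrases it as $r'\le\lceil G\cdot R\rceil$, you phrase it as $\rho\le R$; these are the same discretized rounding), and derive the identical sandwich $R\le\E_{r'}[R']\le R+1/G\le R+1/(2\gamma(n))$ before taking expectation over $r$. You are also right that the additive slack in the displayed inequality should read $1/(2\gamma(n))$ rather than $\gamma(n)/2$; this matches the paper's own proof. Your explicit utility-gap bookkeeping and the observation that any maximizer of $u_{(V',\vec P)}$ must still report the correct answer bit are sound details that the paper leaves implicit.
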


\begin{proof}
  We will go one by one through the payments made in the rational protocol $(V, \vec P)$ and replace them with payments in $\{0,1\}$.  
  At a high level, $V'$ makes a payment
  $1$ in $(V', \vec{P'})$ with probability $R(x,r,\vec{m})$ where $R(x,r,\vec{m})$ is the payment made by $V$ in $(V, \vec{P})$ , and $0$ otherwise.

  Assume without loss of generality that each random string $r$ is a bit string that corresponds exactly to the result of $|r|$ coin flips.  Let $G = 2^{1 + \lceil \log_2 \gamma(n)\rceil}$; thus $1/G \leq 1/2\gamma(n)$. 

  We create a new protocol $(V', \vec{P}')$.  First, $V'$ runs the original protocol $(V, \vec{P})$ to obtain a transcript $\vec{m}$ (given the provers' strategy $s$) and randomness $r$
of $V$. Let $R(x,r,\vec{m})$ be the payment made by $V$.  Then, $V'$ flips $1 + \lceil \log_2 \gamma(n)\rceil$ extra coins; call this string $r'$.  Momentarily treat $r'$ as an integer. If $r' \leq \lceil G\cdot R(x,r,\vec{m})\rceil$ then $V'$ pays 1; otherwise $V'$ pays 0.  Let this final payment 
be denoted by $R(x,r\circ r',\vec{m})$.  

  Note that in the above, because $r'$ is uniformly selected from $G$ distinct values, for any $r$, the $V'$ pays 1 with probability
  \[
    \E_{r'} [R(x,r\circ r',\vec{m})] = \frac{\lceil G\cdot R(x,r,\vec{m})\rceil}{G};
  \]
  thus
  \[
    R(x,r,\vec{m}) \leq  \E_{r'} [R(x,r\circ r',\vec{m})]  \leq R(x,r,\vec{m}) + 1/G\leq R(x,r,\vec{m}) + 1/2\gamma(n).
\]

  The expected payment of the original protocol on input $x$ and transcript $\vec{m}$ is $\sum_r R(x,r,\vec{m}) \Pr(r)$.  
  The expected payment of the new protocol is (using independence of $r'$ and $r$)
  \begin{align*}
    \E_{r,r'} [R(x,r\circ r',\vec{m})] = &\sum_{r} \sum_{r'} R(x,r\circ r',\vec{m}) \Pr(r') \Pr(r)\\ 
&=  \sum_r \E_{r'}[R(x,r\circ r',\vec{m})] \Pr(r).
  \end{align*}
Substituting with the above, the expected payment is bounded by
  $$
    \sum_r R(x,r,\vec{m})\Pr(r) \leq \E_{r,r'} [R(x,r\circ r',\vec{m})] \leq 1/2\gamma(n) + \sum_r R(x,r,\vec{m})\Pr(r).
\qedhere$$
\end{proof}

Given any rational protocol with zero-one payments, we note that it immediately gives us an accept-reject protocol such that
for a given $x$, the probability that the verifier accepts is exactly the expected payment of the original protocol.
More formally let $(V, \vec P)$ be a rational protocol with $R \in \{0,1\}$
and utility gap $\gamma(n)$. Let $(V', \vec{P}')$ be defined as follows: $V'$ simulates $V$, ignores the answer bit $c$,
and if the payment in $(V, \vec{P})$ is $R=1$ then accept, else reject.

Thus, for a given input string $x$, the expected payment in $(V, \vec{P})$ is equal to the probability that $V'$ accepts in $(V', \vec{P}')$.
That is,
\begin{align}
u_{(V,\vec P)}(x; s) &= \E_r [R(x, r, (V, \vec{P})(x, r, s))] \notag =\sum_{r} \Pr(r \text{~$|$~} R(x, r, (V, \vec{P})(x, r, s))=1)\notag\\
&= \sum_{r} \Pr(r \text{~$|$~} V' \text{accepts $(V', \vec{P'})$})= \Pr(\text{$V'$ accepts $(V', \vec{P}')$}) \label{exp-prob}.
\end{align}
Furthermore, $(V', \vec{P}')$ satisfies the following instance-specific properties
similar to completeness and soundness in interactive proofs.
For any $x\in L$, let $s^*$ denote the optimal strategy of the provers $\vec P$, that is,
$s^*$ maximizes their expected payment. Then for $\vec{P'}$
following $s^*$, $V'$ accepts with probability exactly $c(x, n) = u_{(V,\vec P)}(x; s^*)$. Furthermore,
we know from the utility gap condition that for any $x \notin L$, for
any strategy $s'$, the probability
that $V'$ accepts is at most $ u_{(V,\vec P)}(x; s') < u_{(V,\vec P)}(x; s^*) - 1/\gamma(n)$,
that is, the probability that $V'$ accepts is at most $s(x, n) < c(x,n)- 1/\gamma(n)$.
Similar guarantees hold for any $x\notin L$.

However, if we want $(V', \vec{P}')$ to be an interactive proof protocol in the classical sense,
that is, with completeness and soundness guarantees that hold for all $x \in L$ and for all $x\notin L$ respectively,
we need to impose restrictions on the expected payment function of the rational protocol.

\begin{theorem}\thmlabel{main:toip} 
Let $(V, \vec P)$ be an MRIP protocol for a language $L$ such that 
\begin{equation}\label{condition}
\min_{x \in L} u_{(V,\vec P)}(x; s^*) > \max_{x \notin L}  u_{(V,\vec P)}(x; s^*) + \frac{1}{\gamma(n)}
\end{equation}
where $x$ is any input of length $n$, $s^*$ is
the strategy of the provers that maximizes their expected payment in $(V, \vec P)$ and $\gamma(n)$ is
any function such that $\gamma(n)>1$ and $\gamma= O(\poly(n))$. 
Then, $(V, \vec P)$ can be simulated by a MIP protocol for $L$. % exists an M $\sf{MRIP}^u = \sf{MIP}$.
\end{theorem}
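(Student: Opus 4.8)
The plan is to route the conversion through \lemref{zeroone} and the accept--reject reformulation summarized in equation~(\ref{exp-prob}), and then to amplify the resulting gap. Concretely, first apply \lemref{zeroone} to $(V,\vec{P})$, obtaining an MRIP protocol $(V',\vec{P})$ with payments in $\{0,1\}$. The lemma also bounds, for every strategy $s$ and input $x$, $u_{(V,\vec{P})}(x;s)\le u_{(V',\vec{P})}(x;s)\le u_{(V,\vec{P})}(x;s)+1/(2\gamma(n))$, so $\max_s u_{(V',\vec{P})}(x;s)$ differs from $\max_s u_{(V,\vec{P})}(x;s)$ by at most $1/(2\gamma(n))$. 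Writing $c^*(n):=\min_{x\in L,\,|x|=n}\max_s u_{(V',\vec{P})}(x;s)$ and $s^*(n):=\max_{x\notin L,\,|x|=n}\max_s u_{(V',\vec{P})}(x;s)$, hypothesis~(\ref{condition}) survives this transformation up to the additive slack: $c^*(n)>s^*(n)+1/(2\gamma(n))$. Thus, after replacing $\gamma$ by $2\gamma$, I may assume the payments of the rational protocol are $\{0,1\}$-valued.

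Second, pass to the accept--reject protocol $(V'',\vec{P})$ in which $V''$ simulates $V'$, discards the answer bit, and accepts exactly when the payment of $(V',\vec{P})$ is $1$. By equation~(\ref{exp-prob}), for any strategy $s$ the probability that $V''$ accepts on $x$ equals $u_{(V',\vec{P})}(x;s)$; in particular the objective of a ``malicious'' IP prover (maximize acceptance) coincides with that of the cooperative MRIP provers (maximize expected payment). Hence if $x\in L$ the payment-maximizing strategy makes $V''$ accept with probability $\max_s u_{(V',\vec{P})}(x;s)\ge c^*(n)$, while if $x\notin L$ \emph{every} strategy makes $V''$ accept with probability $u_{(V',\vec{P})}(x;s)\le\max_s u_{(V',\vec{P})}(x;s)\le s^*(n)$. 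So $(V'',\vec{P})$ is a multi-prover interactive proof for $L$ with completeness $c^*(n)$, soundness $s^*(n)$, and a noticeable separation $c^*(n)-s^*(n)>1/(2\gamma(n))\ge 1/\poly(n)$.

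Third, to reach the standard completeness/soundness constants, amplify: sequentially repeat $(V'',\vec{P})$ a total of $N=\Theta(\gamma(n)^2)$ times, tally the accepting executions, and accept iff the tally exceeds $N\cdot\tau(n)$, where $\tau(n):=\tfrac12(c^*(n)+s^*(n))$. For $x\notin L$, each execution accepts with probability at most $s^*(n)=\tau(n)-1/(4\gamma(n))$ even conditioned on the history (soundness of $(V'',\vec{P})$ holds against all strategies), while for $x\in L$ the honest provers achieve per-execution acceptance probability at least $c^*(n)=\tau(n)+1/(4\gamma(n))$; a Hoeffding bound then gives completeness $\ge 2/3$ and soundness $\le 1/3$, and $N$ is polynomial since $\gamma$ is.

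The step I expect to be the real obstacle is making the threshold $\tau(n)$ available to the polynomial-time verifier: one needs \emph{some} value strictly inside the separating interval $(s^*(n),c^*(n))$, and this is precisely where the \emph{global} hypothesis~(\ref{condition}) is indispensable --- the per-instance utility gap of \defref{rewardgap} alone gives, for each $x$, a gap only relative to $\max_s u(x;s)$, hence no single cutoff good for all inputs of a given length. Since $c^*(n)$ and $s^*(n)$ depend only on $n$ and $\gamma$ is polynomially bounded, $\tau(n)$ need only be known to precision $1/(4\gamma(n))$, i.e.\ as one of $O(\gamma(n))=\poly(n)$ candidate values --- an $O(\log n)$-bit quantity that can be folded into the (language-dependent) description of the MIP protocol. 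If one is content with an MIP protocol whose completeness and soundness are the functions $c^*(n)$ and $s^*(n)$ rather than $2/3$ and $1/3$, the amplification --- and with it this obstacle --- disappears, and the first two steps already suffice. The remaining checks (that the separation survives \lemref{zeroone}, and that the two prover objectives agree) are routine given the tools already established.
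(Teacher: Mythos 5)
Your proposal is correct and follows essentially the same route as the paper: normalize payments to $\{0,1\}$ via \lemref{zeroone}, reinterpret payment as acceptance to get an MIP with completeness $c(n)$ and soundness $s(n)$ separated by $\Omega(1/\gamma(n))$ (this is the paper's \lemref{toip}), and then repeat $\poly(\gamma(n))$ times and threshold to amplify (the paper's \lemref{amplify}). The only cosmetic differences are that you threshold at the midpoint $\tfrac12(c^*(n)+s^*(n))$ while the paper uses $\rho(n)c(n)(1-1/4\gamma'(n))$, and you stop at constant error while the paper pushes to $1/\poly(n)$; your explicit remarks that the hardwired threshold $\tau(n)$ must be part of the protocol description and that sequential repetition is needed for the per-round soundness bound to hold conditionally are both correct and, if anything, slightly more careful than the paper's treatment.
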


We prove this theorem in two parts. First, we show prove the following lemma
which proves~\thmref{main:toip} with weak completeness
and soundness guarantees. 

\begin{restatable}{lemma}{toip}\lemlabel{toip} 
Let $(V, \vec P)$ be an MRIP protocol for a language $L$ that satisfies the condition~\ref{condition} in~\thmref{main:toip}.
Then, $(V, \vec P)$ can be simulated by MIP protocol 
with completeness and soundness parameters $c(n)$ and $s(n)$ respectively such that $c(n) > s(n) + 1/2\gamma(n)$
and $c(n), s(n)\ge 0$.
\end{restatable}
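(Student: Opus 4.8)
The plan is to compose the two reductions already in hand. First I would Booleanize the payments using \lemref{zeroone}, obtaining an MRIP protocol $(V_1,\vec P)$ with payments in $\{0,1\}$ such that $u_{(V,\vec P)}(x;s)\le u_{(V_1,\vec P)}(x;s)\le u_{(V,\vec P)}(x;s)+1/(2\gamma(n))$ for every input $x$ and every strategy profile $s$. Then I would read the Boolean payment as an accept/reject decision exactly as in the derivation of Eq.~(\ref{exp-prob}): let $(V_2,\vec P)$ be the protocol in which $V_2$ runs $(V_1,\vec P)$, ignores the answer bit $c$, and accepts iff the payment of $(V_1,\vec P)$ is $1$. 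Since $(V_1,\vec P)$ has Boolean payments, Eq.~(\ref{exp-prob}) applied to it gives, for every $x$ and every strategy profile $s$, that $\Pr(V_2\text{ accepts on }s)=u_{(V_1,\vec P)}(x;s)$.

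Next I would transport Condition~(\ref{condition}) of \thmref{main:toip} through the inequality from \lemref{zeroone}. Fix an input $x$ and write $s^*$ for a payment-maximizing profile of $(V,\vec P)$ on $x$ and $s_1^*$ for one of $(V_1,\vec P)$ on $x$. For $x\in L$ the left-hand quantity only increases: $u_{(V_1,\vec P)}(x;s_1^*)\ge u_{(V_1,\vec P)}(x;s^*)\ge u_{(V,\vec P)}(x;s^*)$. For $x\notin L$ the right-hand quantity increases by at most $1/(2\gamma(n))$: $u_{(V_1,\vec P)}(x;s_1^*)\le u_{(V,\vec P)}(x;s_1^*)+1/(2\gamma(n))\le u_{(V,\vec P)}(x;s^*)+1/(2\gamma(n))$, the last step because $s^*$ maximizes $u_{(V,\vec P)}(x;\cdot)$. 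Combining with Condition~(\ref{condition}) yields $\min_{x\in L}u_{(V_1,\vec P)}(x;s_1^*)-\max_{x\notin L}u_{(V_1,\vec P)}(x;s_1^*)>1/\gamma(n)-1/(2\gamma(n))=1/(2\gamma(n))$.

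Finally I would set $c(n)=\min_{|x|=n,\,x\in L}u_{(V_1,\vec P)}(x;s_1^*)$ and $s(n)=\max_{|x|=n,\,x\notin L}u_{(V_1,\vec P)}(x;s_1^*)$, and verify that $(V_2,\vec P)$ is an MIP protocol for $L$ with these parameters: completeness holds because on $x\in L$ the provers may play $s_1^*$, giving acceptance probability $u_{(V_1,\vec P)}(x;s_1^*)\ge c(n)$; soundness holds because on $x\notin L$ every strategy profile $s$ gives acceptance probability $u_{(V_1,\vec P)}(x;s)\le u_{(V_1,\vec P)}(x;s_1^*)\le s(n)$, since $s_1^*$ is by definition a maximizer (in $(V_2,\vec P)$ maximizing acceptance probability and maximizing expected payment are the same objective). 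Boolean payments force $c(n),s(n)\in[0,1]$, and the inequality from the previous paragraph gives $c(n)>s(n)+1/(2\gamma(n))$.

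The routine content is the probability bookkeeping. The point that needs care --- and the main obstacle --- is that the payment-maximizing profile depends on $x$ and can change under Booleanization, so Condition~(\ref{condition}) must be carried over through the chain of inequalities above rather than by substituting a single ``optimal'' strategy, and one must observe that the classical provers' goal of maximizing acceptance probability coincides with the rational provers' goal of maximizing expected payment, which is exactly what makes the soundness bound hold against \emph{all} strategies.
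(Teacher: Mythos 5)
Your proof is correct and takes essentially the same approach as the paper: Booleanize the payments via \lemref{zeroone}, reinterpret the $\{0,1\}$ payment as accept/reject, and set $c(n)$ and $s(n)$ to the minimum (over $x\in L$) and maximum (over $x\notin L$) optimal expected payments of the Boolean protocol. Your explicit chain of inequalities showing that Condition~(\ref{condition}) survives Booleanization with gap $1/(2\gamma(n))$ even though the payment-maximizing profile may change is a point the paper compresses into the one-line remark that the condition ``is still satisfied with $\gamma(n)\gets 2\gamma(n)$.''
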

\begin{proof}
Using \lemref{zeroone}, without loss of generality, let the payment of $(V, \vec P)$
  be $R \in \{0, 1\}$. 
  Since the expected reward under each strategy is changed by at most $1/2\gamma(n)$, the condition of \thmref{main:toip} is still satisfied with $\gamma(n) \gets 2\gamma(n)$.
In the MIP protocol $(V', \vec P)$, $V'$ simulates $V$, ignores the answer bit $c$,
and if the payment in $(V, \vec{P})$ is $1$, then accepts, else rejects.
For a given $x$, the expected payment in $(V, \vec{P})$ is equal to the probability that $V'$ accepts; see~Equation~\ref{exp-prob}.

Define $c(n) = \min_{x \in L} u_{(V,\vec P)}(x, r, s^*)$
and $s(n) = \max_{x \notin L} u_{(V, \vec P)} (x, r, s^*)$. Then, by definition, we have $c(n), s(n) \geq 0$
and $c(n) > s(n) + 1/2\gamma(n)$.

We now show that $c(n)$ and $s(n)$ are the completeness and soundness parameter of the MIP $(V', \vec{P'})$ respectively.
For any $x\in L$,
there exists $\vec{P'}$ where $\vec{P'}$ uses a strategy $s^*$, such that
the probability $V'$ accepts is exactly $u_{(V,\vec P)}(x, s^*) \geq c(n)$.
For any $x\notin L$, for all $\vec{P'}$ using any strategy $s$, % we have 
the probability $V'$ accepts is exactly $u_{(V,\vec P)}(x, s) \leq u_{(V,\vec P)}(x, s^*) \leq s(n)$.
Thus, $(V', \vec{P}')$ is an MIP with completeness and soundness parameters $c(n)$ and $s(n)$ respectively. 
\end{proof}

We amplify the ``gap'' of an MIP by repeating 
the protocol sufficiently many times and then using Chernoff bounds. The techniques are mostly standard, although the parameters must be set carefully to deal with the case $s(n) = 0$.

\begin{restatable}{lemma}{amplify}\lemlabel{amplify}
Given an MIP protocol for a language $L$, with completeness $c(n) > 0$ and soundness $s(n)\geq 0 $ such
  that $c(n) > s(n) + 1/\gamma'(n)$ for some $\gamma'(n)>1$ and $\gamma' =O (\poly(n))$, can be converted to an MIP protocol for $L$
  with completeness at least $1- 1/\poly(n)$ and soundness at most $1/\poly(n)$.
\end{restatable}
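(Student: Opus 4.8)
The plan is to repeat the given MIP protocol $t = t(n)$ times independently in parallel, have the provers play their optimal strategy in each copy, and accept if and only if the number of accepting copies exceeds a carefully chosen threshold $\tau$. The subtlety, as flagged, is that $s(n)$ may be $0$, so a multiplicative Chernoff bound around the soundness mean degenerates; instead I would pick the threshold to sit strictly between $c(n)$ and $s(n)$, say $\tau = (c(n)+s(n))/2 \cdot t$, which lies at additive distance at least $\delta/2$ from each mean, where $\delta = 1/\gamma'(n)$. Then I apply an \emph{additive} Hoeffding bound to the sum of the $t$ independent $\{0,1\}$ indicator variables (one per repeated copy), which works uniformly regardless of whether the underlying mean is $0$, a constant, or anything in between.

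Concretely, for $x \in L$ the provers use the completeness strategy in each copy, so each copy accepts independently with probability at least $c(n)$; the expected number of accepting copies is at least $c(n)\,t$, and Hoeffding gives that the count falls below $\tau$ with probability at most $\exp(-2(\delta/2)^2 t) = \exp(-\delta^2 t/2)$. For $x \notin L$, no matter what (possibly correlated-looking, but per-copy the provers are playing some strategy) the provers do, each copy accepts with probability at most $s(n)$; since the copies use independent randomness of the verifier, the counts are sums of independent indicators, and Hoeffding again gives that the count exceeds $\tau$ with probability at most $\exp(-\delta^2 t/2)$. Here I would note the standard point that in the parallel-repetition-of-MIP setting with a \emph{single} round, independent verifier randomness across copies makes the acceptance indicators independent conditioned on the provers' strategy, so no parallel-repetition soundness-degradation issue arises; if one is uneasy about this, sequential repetition works identically and is clearly sound.

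Choosing $t(n) = \Theta(\gamma'(n)^2 \log n) = \Theta(\poly(n))$ makes both error probabilities at most $1/\poly(n)$ for any desired polynomial, yielding completeness at least $1 - 1/\poly(n)$ and soundness at most $1/\poly(n)$, and the new protocol is still polynomial-time, polynomial-communication, and one round (the copies run in parallel). The main obstacle is purely a matter of care rather than depth: getting the threshold and the concentration bound to be robust to $s(n)=0$ (and, symmetrically, to $c(n)$ close to $s(n)+\delta$ with no slack to spare), which is exactly why I use a two-sided additive bound around a threshold strictly interior to the gap rather than the more common multiplicative bounds around the two means. Everything else—independence across repetitions, preservation of round/communication/time complexity, and the final union bound over the two error events—is routine.
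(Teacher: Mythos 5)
Your core technical approach is correct and arguably cleaner than the paper's: placing the acceptance threshold at the midpoint $\tau = \frac{c(n)+s(n)}{2}\,t$ and applying an additive Hoeffding bound around it handles the $s(n)=0$ case uniformly, whereas the paper's proof repeats the protocol $\rho(n) = 96(\log n)\gamma'(n)^2/c(n)$ times, uses threshold $\tau(n) = \rho(n)c(n)(1-1/4\gamma'(n))$, and invokes multiplicative Chernoff bounds separately in the two directions, requiring a small extra argument to lower-bound the soundness-side deviation when $s(n)$ may be zero. Both parameterizations work and both land on the same $\Theta(\gamma'(n)^2\log n)$ number of repetitions, but your symmetric additive bound avoids the paper's asymmetry and its $1/c(n)$ factor in $\rho(n)$.

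However, your remark that ``independent verifier randomness across copies makes the acceptance indicators independent conditioned on the provers' strategy, so no parallel-repetition soundness-degradation issue arises'' is wrong for multi-prover protocols, and this is not a cosmetic point. In parallel repetition of a one-round MIP, each prover receives all $t$ questions simultaneously and may correlate its answers across copies; the acceptance event of copy $i$ is then a function of the randomness of \emph{all} copies, not just copy $i$, so the indicators $X_i$ are not independent and the Chernoff/Hoeffding argument does not apply directly. This is precisely the phenomenon that necessitates Raz's parallel repetition theorem and the non-trivial bounds therein. Your fallback to sequential repetition is the correct fix: with fresh verifier randomness in each sequential round and provers that cannot communicate between rounds, each round independently accepts with probability at most $s(n)$ (resp.\ at least $c(n)$), so the indicators are genuinely independent and Hoeffding applies. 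Sequential repetition is also what the paper does, at the cost of blowing up the round complexity --- a cost the paper flags in a remark and suggests can be avoided with known parallel-repetition or randomness-efficient amplification theorems, which is the right way to get round-efficiency, not the naive independence claim.
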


\begin{proof}
 %\sam{Why is this WLOG?  What if $c(n) = 1/\log n$?  I am quite sure that every time you say $n^{d_1}$ you can simply say $c(n)$, which will be a lower-order term in the worst case.} 
%Without loss of generality, let  $c(n) = 1/n^{d_1}$, and $s(n) <1/n^{d_1} -  1/n^{d_2}$, where $d_1, d_2$ are constants and $d_1 < d_2$.

  We repeat the MIP protocol 
  $
    %\rho(n) = \frac{32(\log n)\gamma'(n)^2}{c(n) - 1/2\gamma'(n)}
    %\rho(n) = \max\{32(\log n)\gamma'(n)^2, 48(\log n)\gamma'(n)\}
    \rho(n) = 96(\log n)\gamma'(n)^2/c(n)
  $
  times and accept if more than $\tau(n) = \rho(n)c(n)(1 - 1/4\gamma'(n))$
  of the instances end in accept.   %With this definition, $\tau(n) \geq \rho(n)c(n)(1 - 1/\gamma'(n))(1 + 1/2\gamma'(n))$. 
  %since $s(n) < c(n) - 1/\gamma'(n) < 1 - 1/2\gamma'(n)$, 
  %we have 
%\begin{align*}
%  c(n) &\geq s(n) + \frac{1}{\gamma'(n)}\frac{s(n)}{1 - \frac{1}{2\gamma'(n)}} \\
%    & = s(n) \left(1 + \frac{1/\gamma'(n)}{1 - 1/2\gamma'(n)}\right)\\
%    & = s(n)\frac{(1 + 1/2\gamma'(n))}{(1 - 1/2\gamma'(n))};
%\end{align*}
  %therefore, $\tau(n) \geq \rho(n)s(n)(1 + 1/2\gamma'(n))$. 

Let the random indicator variable $X_i$ be $1$ if the verifier in the $i$th repetition accepts, otherwise $X_i=0$.
  Let $X= \sum_{i=1}^{\rho(n)} X_i$ be the total number of accepts. 

%We prove that for
%any $x \in L$, there exists a strategy of the provers such that $\Pr(X > \tau) > 1-2^{-\poly(n)}$ and for any $x\notin L$, for
%any strategy of the provers, $\Pr(X  > \tau) <2^{-\poly(n)}$.
%
Consider an $x\in L$. Then if provers use their best strategy of the original MIP protocol in each iteration, we have
\[
  \E[X] = \E\left[\sum_{i=1}^{\rho(n)} X_i\right] = \sum_{i=1}^{\rho(n)} \E\left[X_i\right] \ge \sum_{i=1}^{\rho(n)} c(n) = c(n)\rho(n).
\]
  Using Chernoff bounds, we obtain\footnote{This uses a slight extension of Chernoff bounds that uses a bound on the expectation rather than the expectation itself; see Exercise 4.7 in~\cite{MitzenmacherUpfal05} for example.} %the following (recall $c(n) > 1/g'(n)$ by definition)
\begin{align*}
  \Pr(X <\tau(n)) 
  %&= \Pr\left(X <  \left(\frac{c(n) - \frac{1}{2g'(n)}}{c(n)}\right) c(n)g'(n)\log n\right) \\% < e^{- \E(X)\delta^2/2}\\
  %&= \Pr\left(X <  \left(\frac{\rho(n) - 1/2g'(n)}{c(n)\rho(n)}\right) \E(X)\right) \\
  &= \Pr\left(X <  \left(1 - \frac{1}{4\gamma'(n)}\right) c(n)\rho(n)\right) \\
  &\leq e^{- \frac{c(n)\rho(n)}{32\gamma'(n)^2}} < 1/n.
%\text{ where $k = 2d_2-d_2>0$}
\end{align*}
  %In the last step we used $c(n) > c(n) - 1/2\gamma'(n)$.  

Now consider an $x\notin L$. For any strategy of the provers, by the soundness guarantee 
  at most $s(n)$ of the original protocol (and using linearity of expectation as above) we have
  $
  \E(X) \leq s(n)\rho(n) < \rho(n)(c(n) - 1/2\gamma'(n))$. %\leq (c(n) - 1/g'(n))g'(n)\log n 
  Note $c(n) - 1/2\gamma'(n) > c(n)/2$ and $\tau(n) > \rho(n)(c(n)-1/2\gamma'(n))(1 + 1/4\gamma'(n))$. 
  Then we can use the following  Chernoff bound %s, we obtain the following.
\begin{align*}
  \Pr(X >\tau(n)) &\leq \Pr\left(X > \left(1 + \frac{1}{4\gamma'(n)} \right)\rho(n)(c(n) - 1/2\gamma'(n))\right)\\
  %&=\Pr\left(X > \left(1+ \frac{1/2g'(n)}{c(n) - 1/g'(n)}\right) \E[X] \right)\\
  %&=\Pr\left(X > \left(1+ \frac{1}{2g'(n)c(n)}\right) \E[X] \right)\\
  &\leq e^{-\rho(n)(c(n) - 1/2\gamma'(n))/48\gamma'(n)^2} \leq e^{-\rho(n)c(n)/96\gamma'(n)^2} \leq 1/n.
  %&<  e^{-\frac{\E[X]}{6g'(n)} \leq 
% &= e^{- (1/2)(n^{d_1d_2} - n^{2d_1})} < 2^{-O(n^k)} \text{ where $k = d_1d_2 - 2d_1 >0$}.\qedhere
\end{align*} 

  The same analysis extends to any $1/\poly(n)$ instead of $1/n$ when $\rho(n)$ is increased by a constant. 
\end{proof}

\begin{remark}
The repetition of the MIP protocol to amplify its completeness and soundness guarantee
  used in~\lemref{amplify} is not efficient as it blows up the number of rounds. There exist 
more efficient techniques to amplify IP guarantees by parallel repetition that can be used instead; for example, see~\cite{goldreich1998modern, bellare1993randomness, raz1998parallel,feige1994two}.
\end{remark}

\bibliographystyle{plain}

\end{document}